\documentclass[11pt]{article}

\usepackage[a4paper,margin=1in]{geometry}

\usepackage[T1]{fontenc}
\usepackage{lmodern}
\usepackage{abstract}
\usepackage{array}
\usepackage{amsmath}
\usepackage{amssymb}
\usepackage{amsthm}
\usepackage{mathtools}
\usepackage{mathrsfs}
\usepackage{stmaryrd}  

\usepackage{enumitem}
\usepackage{booktabs}
\usepackage{multirow}
\usepackage{float}
\usepackage{pifont}
\usepackage{xcolor}

\usepackage{tikz}
\usetikzlibrary{positioning,arrows.meta,shapes,calc,decorations.pathreplacing,fit}

\usepackage{url}
\usepackage{hyperref}
\hypersetup{
	colorlinks=true,
	linkcolor=blue,
	citecolor=blue,
	urlcolor=blue
}

\newcommand{\Prop}{\mathsf{Prop}}

\newcommand{\coalc}{\overline{C}}

\newcommand{\sem}[1]{\ensuremath{\llbracket {#1} \rrbracket}}

\newcommand{\CL}{\mathsf{CL}}
\newcommand{\ATL}{\mathsf{ATL}}

 \newcommand{\CLFI}{\mathsf{CL}^{\FI}}

\newcommand{\Eff}[1]{\langle #1\rangle}

\newcommand{\FC}{\mathsf{FC}}
\newcommand{\PD}{\mathsf{PD}}
\newcommand{\AD}{\mathsf{AD}}
\newcommand{\FI}{\mathsf{FI}}

\newcommand{\powerset}{\mathcal{P}}
\newcommand{\defeq}{\coloneqq}

\newtheorem{observation}{Observation}

\newcommand{\PSPACE}{\mathbf{PSPACE}}

\newcommand{\Lang}{\mathcal{L}}
\newcommand{\Mod}{\mathcal{M}}

\newcommand{\tr}{t}

\theoremstyle{definition}
\newtheorem{definition}{Definition}[section]
\newtheorem{example}[definition]{Example}

\theoremstyle{plain}
\newtheorem{theorem}[definition]{Theorem}
\newtheorem{lemma}[definition]{Lemma}
\newtheorem{proposition}[definition]{Proposition}
\newtheorem{corollary}[definition]{Corollary}

\theoremstyle{remark}
\newtheorem{remark}[definition]{Remark}

\newcommand{\Rfc}{\mathcal{R}_{\mathsf{FC}}}
\newcommand{\Rpd}{\mathcal{R}_{\mathsf{PD}}}
\newcommand{\Rad}{\mathcal{R}_{\mathsf{AD}}}
\newcommand{\Rfi}{\mathcal{R}_{\mathsf{FI}}}

\newcommand{\coop}[1]{\langle\!\langle #1 \rangle\!\rangle}

\newenvironment{keywords}
{\par\noindent\textbf{Keywords: }}
{\par\medskip}

\title{\textbf{Beyond Ability: The Four-Fold Spectrum of Power and the Logic of Full Inability}}

\author{
	Shanxia Wang\\
	School of Computer and Information Engineering
	(School of Artificial Intelligence)\\
	Henan Normal University, Xinxiang City,
	Henan Province, China\\
	\texttt{wangshanxia@htu.edu.cn}
}
\date{}

\begin{document}
	
	\maketitle
	
	\begin{abstract}
		Coalition Logic studies what coalitions can enforce. Recent work treats inability as simple non-ability: $\neg\Eff{C}\varphi$. This conflates two distinct configurations---a coalition unable to force $\varphi$ may still force $\neg\varphi$, retaining adversarial control rather than genuine inability. We introduce \textbf{Full Inability} ($\FI$): the symmetric condition in which a coalition can enforce neither a proposition nor its negation.
		
		Combining coalitional effectivity with propositional negation yields a four-fold spectrum: \textbf{Full Control} ($\FC$), \textbf{Positive Determination} ($\PD$), \textbf{Adverse Determination} ($\AD$), and \textbf{Full Inability} ($\FI$). These categories partition a coalition's strategic status exhaustively and exclusively. We establish their algebraic and order-theoretic structure. Under $\alpha$-duality, propositional negation and coalition complementation generate a Klein four-group symmetry. In playable models, the four power regions are order-convex in the powerset lattice, yielding interval-stable verification of inability.
		
		We axiomatize $\CLFI$, a definitional extension treating Full Inability as a primitive modality. Via elimination translation, we prove soundness, completeness, and conservativity over Coalition Logic. The extension preserves expressive power and complexity ($\PSPACE$-complete), but provides direct proof-theoretic access to symmetric inability, strategic dependence, propositional dummyhood, and containment verification.
	\end{abstract}
	
	\begin{keywords}
		Full Inability, Coalition Logic, four-fold spectrum, order-convexity, Klein four-group, playable effectivity, axiomatization, strategic neutralization.
	\end{keywords}

	\section{Introduction}
	\label{sec:introduction}
	
	\subsection{The Logical Structure of Strategic Inability}
	\label{subsec:blind-spot}
	
	Since Pauly's seminal work \cite{Pauly02}, Coalition Logic ($\CL$) has become the standard framework for reasoning about the strategic abilities of groups of agents \cite{Wooldridge09}. Its central modality $\Eff{C}\varphi$ expresses that coalition $C$ can guarantee $\varphi$. The paradigm has been extended to temporal dynamics \cite{AHK02}, epistemic uncertainty, and resource constraints. Yet the negative dimension of strategic power---what coalitions \emph{cannot} determine---has long remained derivative.
	
	Recent work \cite{wang2026logic} elevated inability to an independent modality, introducing the operator $\text{Iab}_C\varphi \equiv \neg\Eff{C}\varphi$ and establishing its structural properties: anti-monotonicity with respect to coalition inclusion, contravariance with respect to goal strength, and asymmetric interaction with Boolean connectives. This treatment made inability a first-class concept, but it remained a simple negation of ability---a binary distinction between what a coalition can and cannot enforce.
	
	\subsection{From Simple Inability to Full Inability}
	\label{subsec:from-simple-to-full}
	
	The binary treatment of inability is too coarse for fine-grained strategic analysis. A coalition unable to force $\varphi$ need not be powerless: it may still force $\neg\varphi$, thereby possessing \emph{adversarial control} rather than genuine strategic neutralization. The core limitation of equating inability with mere non-ability is that it conflates two structurally distinct configurations:
	\begin{enumerate}[label=(\roman*)]
		\item the coalition cannot force $\varphi$ but can force $\neg\varphi$---a state we term \textbf{Adverse Determination} ($\AD$);
		\item the coalition can force neither $\varphi$ nor $\neg\varphi$---the state of \textbf{Full Inability} ($\FI$).
	\end{enumerate}
	
	To separate these cases we introduce the symmetric notion of \textbf{Full Inability}:
	\[
	\FI_C(\varphi) \;\defeq\; \neg\Eff{C}\varphi \;\wedge\; \neg\Eff{C}\neg\varphi.
	\]
	Thus $\FI_C(\varphi)$ expresses a symmetric absence of deterministic power. From $C$'s perspective the truth value of $\varphi$ remains strategically unsettled: its resolution depends on the environment, the complementary coalition $\coalc$, or their interaction.
	
	By simultaneously evaluating a coalition's ability to enforce both $\varphi$ and $\neg\varphi$, we obtain a four-fold spectrum of strategic power. This spectrum reveals algebraic symmetries and order-theoretic structure that remain invisible when inability is treated as a single undifferentiated concept. Shifting from a binary contrast to a four-fold classification supplies a precise instrument for isolating situations in which a coalition is completely neutralized with respect to a given proposition.
	
	\subsection{The Four-Fold Spectrum and Its Structure}
	\label{subsec:four-fold-overview}
	
	The positive counterpart of Full Inability is \textbf{Full Control} ($\FC$), which holds when a coalition can enforce either truth value:
	\[
	\FC_C(\varphi) \;\defeq\; \Eff{C}\varphi \;\wedge\; \Eff{C}\neg\varphi.
	\]
	In arbitrary playable models these two extremes satisfy only one-way polarity: $\FC_C(\varphi)$ entails $\FI_{\coalc}(\varphi)$, but the converse fails in general. Exact dual equivalence emerges only under the stronger assumption of $\alpha$-duality (Section~\ref{sec:duality-theory}).
	
	Together with the two one-sided cases, $\FI$ and $\FC$ generate the four-fold spectrum
	\[
	\FC \quad \PD \quad \AD \quad \FI,
	\]
	standing for Full Control, Positive Determination, Adverse Determination, and Full Inability. The partition is reminiscent of Belnap's four-valued logic \cite{Belnap77}, yet its interpretation is coalitional rather than semantic: the four values classify strategic power over a proposition.
	
	\subsection{Motivating Scenarios}
	\label{subsec:motivating-scenarios}
	
	The need for $\FI$ as a refinement of inability appears across several domains.
	
	\paragraph{Social Choice and Voting.}
	In weighted voting games \cite{BCG16} a \emph{dummy player} is one whose inclusion never alters a coalition's winning status. $\FI$ captures a local, proposition-sensitive form of non-pivotality. A marginalized voter may satisfy $\FI$ with respect to a bill: acting alone the voter can neither pass nor veto it. Simple inability such as $\neg\Eff{\{i\}}\mathsf{pass}$ fails to distinguish this voter from one who still retains veto power.
	
	\paragraph{Game Theory.}
	In the classic Matching Pennies game let $p$ denote that the coins match. Each individual player $i$ satisfies $\FI_{\{i\}}(p)$: neither can independently guarantee a match or a mismatch. Yet the grand coalition satisfies $\FC_N(p)$. Here $\FI$ captures strategic dependence: unilateral agents lack deterministic control, while coordination restores it.
	
	\paragraph{AI Safety and Containment.}
	Containment architectures \cite{Bostrom14,Amodei16} aim to restrict the strategic reach of artificial agents. A robust sandbox may require an AI subsystem $C$ to be fully unable to determine critical safety variables. Requiring only $\neg\Eff{C}\mathsf{breach}$ is insufficient: the system might still force $\neg\mathsf{breach}$ and exploit that ability as strategic leverage. $\FI$ supplies a formal criterion for complete strategic neutralization.
	
	\subsection{Contributions and Organization}
	\label{subsec:contributions}
	
	This paper integrates Full Inability into Coalition Logic. The main contributions are:
	
	\begin{enumerate}[label=(\arabic*)]
		\item \textbf{A Four-Fold Spectrum of Power.}
		We define an exhaustive classification of coalitional status:
		\[
		\FC,\quad \PD,\quad \AD,\quad \FI.
		\]
		These categories distinguish Full Control, one-sided determination, and Full Inability (Section~\ref{sec:classification}).
		
		\item \textbf{Polarity, Duality, and Symmetry.}
		Playable models validate the one-way polarity
		\[
		\FC_C(\varphi) \Rightarrow \FI_{\coalc}(\varphi).
		\]
		Under $\alpha$-duality this becomes a dual equivalence governed by a Klein four-group symmetry (Section~\ref{sec:duality-theory}).
		
		\item \textbf{Lattice-Theoretic Convexity.}
		By mapping formulas to truth sets in the powerset lattice we prove that each power region is order-convex. This establishes the stability of Full Inability and supports interval-based verification (Section~\ref{sec:structural-properties}).
		
		\item \textbf{Axiomatization of $\CLFI$.}
		We introduce $\CLFI$, a definitional extension of Coalition Logic that treats Full Inability as a primitive modality. Via an elimination translation we establish soundness, completeness, and conservativity (Section~\ref{sec:axiomatization}).
	\end{enumerate}
	
	The paper proceeds as follows. Section~\ref{sec:related-work} surveys related work. Section~\ref{sec:preliminaries} reviews Coalition Logic and playable effectivity functions. Sections~\ref{sec:classification}--\ref{sec:structural-properties} develop the four-fold spectrum and its structural properties. Section~\ref{sec:axiomatization} presents the proof system and meta-theory of $\CLFI$. Section~\ref{sec:conclusion} concludes with future directions.

	\section{Related Work}
	\label{sec:related-work}
	
	The logical study of strategic power has developed from effectivity-based accounts of coalitional ability into a broad family of systems incorporating time, knowledge, resources, and institutional constraints \cite{Wooldridge09, Jamroga23}. This paper contributes to this line by isolating a symmetric form of coalitional non-determination---\textbf{Full Inability}---and studying its logical, algebraic, and proof-theoretic structure.
	
	\paragraph{Coalitional Ability and Effectivity.}
	Coalition Logic ($\CL$) \cite{Pauly02, Pauly01} provides a one-step modal logic of coalitional effectivity: $\Eff{C}\varphi$ states that coalition $C$ can enforce $\varphi$. Alternating-time Temporal Logic ($\ATL$) \cite{AHK02} extends this to temporal objectives over concurrent game structures. Strategic reasoning has been further developed through strategy logic \cite{MMPV14} and verification frameworks \cite{CLMR23, GHL22}. Semantic foundations for multi-agent logics have been systematically compared \cite{GJ04}. In these traditions, inability is represented by external negation, $\neg\Eff{C}\varphi$. This suffices for expressing failure to enforce a formula, but it does not distinguish one-sided adversarial control from complete strategic non-determination. We make this distinction explicit by separating Adverse Determination from Full Inability.
	
	\paragraph{Inability and Negative Agency.}
	The analysis of what agents are unable to do has a longer background in philosophical logic and theories of agency \cite{Kenny75, Horty01, Maier22}. Belnap and colleagues developed a stit-theoretic account of agency and choice in indeterminist settings \cite{BPX01}, where inability emerges from the structure of branching time. In formal multi-agent systems, inability often appears indirectly---as a consequence of imperfect information, insufficient resources, or environmental restrictions.
	
	More recently, inability has been studied as a modality in its own right \cite{wang2026logic}. That work introduced the operator $\text{Iab}_C\varphi \equiv \neg\Eff{C}\varphi$ as a primitive modality, established its axiomatization as a conservative extension of Coalition Logic, and derived its structural laws: anti-monotonicity over coalitions ($C \subseteq D \Rightarrow \text{Iab}_D\varphi \to \text{Iab}_C\varphi$), contravariance over goals ($\varphi \to \psi \Rightarrow \text{Iab}_C\psi \to \text{Iab}_C\varphi$), and asymmetric distribution over Boolean connectives. However, it treated inability as a single, undifferentiated concept---the simple negation of ability.
	
	The present paper advances this line of investigation by distinguishing \emph{simple inability} ($\neg\Eff{C}\varphi$) from \emph{Full Inability}:
	\[
	\FI_C(\varphi) \equiv \neg\Eff{C}\varphi \wedge \neg\Eff{C}\neg\varphi.
	\]
	This refinement yields a four-fold classification of coalitional status: Full Control, Positive Determination, Adverse Determination, and Full Inability. Under $\alpha$-duality, these categories exhibit a Klein four-group symmetry generated by propositional negation and coalition complementation. In playable models, they correspond to order-convex regions in the powerset lattice. Thus Full Inability is not merely a stronger form of inability, but the cornerstone of a systematic algebraic and order-theoretic structure governing strategic power. Where the earlier work established inability as a first-class modality, the present work reveals its internal structure.
	
	\paragraph{Power Indices and Social Choice.}
	Social choice theory and cooperative game theory study power through notions such as pivotality, dummy players, and quantitative power indices, including the Banzhaf and Shapley--Shubik indices \cite{BCG16, Kurz22}. Classical results on voting manipulation \cite{Gibbard73} and recent complexity-theoretic analyses \cite{EHL23} reveal the computational structure of strategic influence in voting. These approaches typically measure an agent's marginal influence globally across coalitions or voting configurations. The $\CLFI$ framework is complementary: it provides a qualitative, state-dependent, and proposition-relative analysis of strategic influence. In particular, Full Inability supports a local notion of dummyhood: an agent or coalition may be unable to determine a particular proposition at a given state even if it is not globally powerless.
	
	\paragraph{Resources, Constraints, and Strategic Neutralization.}
	A substantial body of work studies how strategic ability changes under resource bounds, action restrictions, memory limitations, or other constraints \cite{ADL14, ABL17, GLP24, BMM23}. These approaches typically restrict the strategies or capacities available to agents. Full Inability instead specifies a target condition on the resulting effectivity profile: a coalition is fully unable with respect to $\varphi$ exactly when it can enforce neither $\varphi$ nor $\neg\varphi$. This distinction is relevant to formal discussions of containment and AI safety \cite{Bostrom14, Amodei16}. When the design goal is strategic neutralization rather than beneficial control, one may require $\FI_C(\varphi)$ for critical formulas $\varphi$. Such a requirement is stronger than merely blocking one harmful outcome, since it also rules out unilateral control over the opposite condition.
	
	\paragraph{Knowledge and Strategic Ability.}
	The interaction between knowledge and ability has been extensively studied in epistemic and strategic logics \cite{FHMV95, vDHK07, AA19, AJ22}. Ruan and colleagues explored how agents reason about their own abilities in coalitional games \cite{Ruan22}. Much of this work focuses on what agents know they can achieve, or on how imperfect information affects strategic ability. The present paper does not add epistemic operators to $\CLFI$, but the primitive availability of $\FI$ suggests natural extensions. For example, $K_i\FI_C(\varphi)$ would express that agent $i$ knows coalition $C$ to be fully unable to determine $\varphi$. This would allow reasoning about acknowledged dependence, publicly recognized non-pivotality, and common knowledge of strategic neutralization. We leave such epistemic extensions for future work.
	
	\paragraph{Power in Social Networks and Games.}
	Recent work has examined power structures in social networks through logical frameworks \cite{ABG24, Liu23} and game-theoretic perspectives \cite{Turrini17, Goranko13}. Goranko, Jamroga, and Turrini established the connection between strategic games and truly playable effectivity functions \cite{Goranko13}, providing the semantic foundation for our analysis. These approaches study how network topology and strategic interaction shape coalitional influence. Our four-fold spectrum provides a complementary local analysis: at each state, we classify a coalition's status with respect to a specific proposition, revealing fine-grained patterns of power and dependence.
	
	\paragraph{Extensions and Variants of Coalition Logic.}
	Coalition Logic has been extended in several directions. First-order variants incorporate quantification over agents and propositions \cite{Galimullin25}. Minimal coalition logics explore the core expressive power of coalitional modalities \cite{Li25}. Generalized coalition logics study alternative semantic frameworks and model equivalences \cite{Chen25}. Description-logic-based approaches integrate coalitional reasoning with ontological knowledge representation \cite{Seylan09, Seylan10}. Strategy logic with communicative actions models explicit communication among agents \cite{Mittelmann24}. Our contribution is orthogonal to these extensions: Full Inability can be integrated into any of these frameworks by refining the treatment of negative coalitional power.
	
	\paragraph{Order-Theoretic and Algebraic Perspectives.}
	Effectivity semantics is intrinsically order-theoretic: each coalition is associated with a family of enforceable sets of outcomes, ordered by inclusion. The powerset lattice $(\powerset(W),\subseteq)$ therefore provides a natural algebraic environment for studying ability and inability. Standard monotonicity principles for effectivity functions already reflect this order structure \cite{DP02}. Modal logic foundations \cite{Chellas80} provide the general framework for interpreting coalitional modalities. Building on this perspective, we show that the four coalitional categories induced by $\Eff{C}\varphi$ and $\Eff{C}\neg\varphi$ correspond to order-convex regions in the powerset lattice. Under the additional assumption of $\alpha$-duality, these regions also exhibit a Klein four-group symmetry (the group $V_4 \cong \mathbb{Z}_2 \times \mathbb{Z}_2$) generated by propositional negation and coalition complementation, connecting our framework to bilattice semantics \cite{Fitting91}.

	\section{Preliminaries}
	\label{sec:preliminaries}
	
	This section establishes the formal background for Coalition Logic ($\CL$) and effectivity functions, following Pauly \cite{Pauly02, Pauly01}.
	
	\subsection{Coalition Logic and Effectivity Functions}
	\label{subsec:cl-semantics}
	
	Let $N = \{1, \ldots, n\}$ be a finite set of agents and let $\Prop$ be a countable set of atomic propositions. A \emph{coalition} is any subset $C \subseteq N$. The complementary coalition is denoted by
	\[
	\coalc \defeq N \setminus C.
	\]
	For any outcome set $X \subseteq W$, we write
	\[
	\overline{X} \defeq W \setminus X
	\]
	for its complement in $W$.
	
	\begin{definition}[Language $\Lang_{\CL}$]
		The language of Coalition Logic is generated by:
		\[
		\varphi ::= p \mid \neg\varphi \mid (\varphi \wedge \psi) \mid \Eff{C}\varphi,
		\]
		where $p \in \Prop$ and $C \subseteq N$. The formula $\Eff{C}\varphi$ is read as ``coalition $C$ can ensure $\varphi$.''
	\end{definition}
	
	The semantics is given by coalition models. At each state, an effectivity function specifies which sets of outcomes each coalition can enforce.
	
	\begin{definition}[Coalition Model]
		A \emph{coalition model} is a tuple
		\[
		\Mod = (W,E,V)
		\]
		where:
		\begin{itemize}
			\item $W$ is a non-empty set of states;
			\item $E \colon W \to (\powerset(N) \to \powerset(\powerset(W)))$ assigns to each state $w \in W$ and coalition $C \subseteq N$ a family $E_w(C)$ of outcome sets enforceable by $C$ at $w$;
			\item $V \colon \Prop \to \powerset(W)$ is a valuation.
		\end{itemize}
	\end{definition}
	
	The satisfaction relation $\Mod,w \models \varphi$ is defined inductively:
	\begin{align*}
		\Mod,w &\models p
		&&\Longleftrightarrow\quad
		w \in V(p), \\
		\Mod,w &\models \neg\varphi
		&&\Longleftrightarrow\quad
		\Mod,w \not\models \varphi, \\
		\Mod,w &\models \varphi \wedge \psi
		&&\Longleftrightarrow\quad
		\Mod,w \models \varphi \text{ and } \Mod,w \models \psi, \\
		\Mod,w &\models \Eff{C}\varphi
		&&\Longleftrightarrow\quad
		\sem{\varphi}_{\Mod} \in E_w(C),
	\end{align*}
	where
	\[
	\sem{\varphi}_{\Mod}
	=
	\{u \in W \mid \Mod,u \models \varphi\}
	\]
	is the truth set of $\varphi$ in $\Mod$.
	
	\subsection{Strategic Game Forms and Playability}
	\label{subsec:playability}
	
	The effectivity function $E_w$ abstracts the strategic possibilities available at state $w$. A \emph{strategic game form} at $w$ consists of a non-empty action set $Act_i$ for each agent $i \in N$ and an outcome function
	\[
	o \colon \prod_{i \in N} Act_i \to W.
	\]
	Coalition $C$ can enforce an outcome set $X \subseteq W$ if it has a joint strategy guaranteeing that the resulting outcome lies in $X$, regardless of how the complementary coalition acts:
	\[
	X \in E_w(C)
	\quad\Longleftrightarrow\quad
	\exists s_C \in \prod_{i \in C} Act_i\;
	\forall s_{\coalc} \in \prod_{i \in \coalc} Act_i
	\colon
	o(s_C,s_{\coalc}) \in X.
	\]
	
	Pauly's representation theorem establishes that the effectivity functions induced by strategic game forms are exactly the \emph{playable} effectivity functions. We use the following standard characterization.
	
	\begin{definition}[Playability]
		\label{def:playability}
		An effectivity function
		\[
		E_w \colon \powerset(N) \to \powerset(\powerset(W))
		\]
		is \emph{playable} if it satisfies the following conditions:
		\begin{enumerate}[label=(\roman*)]
			\item \textbf{Liveness}: $\emptyset \notin E_w(C)$ for all $C \subseteq N$.
			
			\item \textbf{Safety}: $W \in E_w(C)$ for all $C \subseteq N$.
			
			\item \textbf{Outcome Monotonicity}: if $X \in E_w(C)$ and $X \subseteq Y \subseteq W$, then $Y \in E_w(C)$.
			
			\item \textbf{Superadditivity}: if $C \cap D = \emptyset$, $X \in E_w(C)$, and $Y \in E_w(D)$, then
			\[
			X \cap Y \in E_w(C \cup D).
			\]
			
			\item \textbf{$N$-Maximality}: for every $X \subseteq W$,
			\[
			X \notin E_w(\emptyset)
			\quad\Longleftrightarrow\quad
			\overline{X} \in E_w(N).
			\]
		\end{enumerate}
	\end{definition}
	
	A coalition model $\Mod=(W,E,V)$ is \emph{playable} if $E_w$ is playable for every $w \in W$. We record two basic consequences of playability.
	
	\begin{lemma}[Coalition Monotonicity]
		\label{lem:coalition-monotonicity}
		If $E_w$ is playable and $C \subseteq D$, then
		\[
		E_w(C) \subseteq E_w(D).
		\]
	\end{lemma}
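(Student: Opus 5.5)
The plan is to derive coalition monotonicity directly from superadditivity, splitting the larger coalition into $C$ and the disjoint remainder $D \setminus C$, and using safety to supply a trivial set for the remainder. Fix a state $w$ with $E_w$ playable, coalitions $C \subseteq D$, and an arbitrary $X \in E_w(C)$. The goal is to show $X \in E_w(D)$.

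Set $D' \defeq D \setminus C$. Then $C \cap D' = \emptyset$ and $C \cup D' = D$, because $C \subseteq D$. By Safety (Definition~\ref{def:playability}(ii)), applied to $D'$, we have $W \in E_w(D')$. This also holds when $D' = \emptyset$, since Safety quantifies over all coalitions.

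Now apply Superadditivity (Definition~\ref{def:playability}(iv)) to the disjoint coalitions $C$ and $D'$, with $X \in E_w(C)$ and $W \in E_w(D')$. This gives
\[
X \cap W \in E_w(C \cup D') = E_w(D).
\]
Since $X \subseteq W$, we have $X \cap W = X$, so $X \in E_w(D)$. As $X$ was arbitrary, $E_w(C) \subseteq E_w(D)$.

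There is no substantive obstacle. The only point requiring care is the degenerate case $C = D$, where $D' = \emptyset$. It is covered uniformly, because Safety applies to the empty coalition and Superadditivity places no nonemptiness requirement on the coalitions involved. Outcome Monotonicity and $N$-Maximality are not needed.
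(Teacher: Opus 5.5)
Your proof is correct and is essentially the paper's argument: write $D = C \cup (D \setminus C)$, use Safety to get $W \in E_w(D \setminus C)$, and apply Superadditivity to conclude $X \cap W = X \in E_w(D)$. Your remark on the degenerate case $D \setminus C = \emptyset$ is a valid clarification that the paper leaves implicit.
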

	
	\begin{proof}
		Let $X \in E_w(C)$. Since $C \subseteq D$, we can write
		\[
		D = C \cup (D \setminus C),
		\]
		with $C \cap (D \setminus C)=\emptyset$. By Safety,
		\[
		W \in E_w(D \setminus C).
		\]
		By Superadditivity applied to $C$ and $D \setminus C$,
		\[
		X \cap W = X \in E_w(C \cup (D \setminus C)) = E_w(D).
		\]
		Therefore $E_w(C) \subseteq E_w(D)$.
	\end{proof}
	
	\begin{lemma}[Regularity]
		\label{lem:regularity}
		If $E_w$ is playable, then for all $C \subseteq N$ and $X \subseteq W$,
		\[
		X \in E_w(C)
		\quad\Longrightarrow\quad
		\overline{X} \notin E_w(\coalc).
		\]
	\end{lemma}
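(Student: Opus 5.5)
The plan is a proof by contradiction that uses Superadditivity and Liveness together, since $C$ and $\coalc$ are disjoint by definition. Fix a playable $E_w$, a coalition $C \subseteq N$, and an outcome set $X \in E_w(C)$. Suppose, towards a contradiction, that $\overline{X} \in E_w(\coalc)$ as well.

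The key step is to apply Superadditivity of Definition~\ref{def:playability} to the disjoint coalitions $C$ and $\coalc = N \setminus C$. This gives
\[
X \cap \overline{X} \in E_w(C \cup \coalc) = E_w(N).
\]
Since $X \cap \overline{X} = \emptyset$, we get $\emptyset \in E_w(N)$. This contradicts Liveness, which requires $\emptyset \notin E_w(N)$. Hence $\overline{X} \notin E_w(\coalc)$.

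The edge cases need no separate treatment. When $C = \emptyset$ we have $\coalc = N$, and when $C = N$ we have $\coalc = \emptyset$. In both cases the two coalitions are still disjoint and their union is $N$, so the same argument goes through.

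There is no real obstacle here. The only point to check is that Superadditivity as stated in Definition~\ref{def:playability} asks only for $C \cap D = \emptyset$, which holds for $C$ and its complement, and places no further side conditions. Neither $N$-Maximality nor Lemma~\ref{lem:coalition-monotonicity} is needed.
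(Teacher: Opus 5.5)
Your proof is correct and is essentially identical to the paper's: assume $\overline{X} \in E_w(\coalc)$, apply Superadditivity to the disjoint pair $C$, $\coalc$ to get $\emptyset = X \cap \overline{X} \in E_w(N)$, and contradict Liveness. Your additional remarks about the edge cases and about not needing $N$-Maximality are accurate but not required.
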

	
	\begin{proof}
		Suppose, for contradiction, that
		\[
		X \in E_w(C)
		\quad\text{and}\quad
		\overline{X} \in E_w(\coalc).
		\]
		Since $C \cap \coalc = \emptyset$, Superadditivity gives
		\[
		X \cap \overline{X} = \emptyset \in E_w(C \cup \coalc) = E_w(N).
		\]
		This contradicts Liveness, which requires $\emptyset \notin E_w(N)$. Hence
		\[
		\overline{X} \notin E_w(\coalc).
		\]
	\end{proof}
	
	\subsection{$\alpha$-Duality}
	\label{subsec:alpha-duality}
	
	Regularity provides a one-way exclusion principle: if coalition $C$ can enforce $X$, then the complementary coalition $\coalc$ cannot enforce the complement $\overline{X}$. In some settings, this one-way implication strengthens to an equivalence.
	
	\begin{definition}[$\alpha$-Duality]
		\label{def:alpha-duality}
		An effectivity function $E_w$ satisfies \emph{$\alpha$-duality} if for all $C \subseteq N$ and all $X \subseteq W$,
		\[
		X \in E_w(C)
		\quad\Longleftrightarrow\quad
		\overline{X} \notin E_w(\coalc).
		\]
		A coalition model $\Mod=(W,E,V)$ is \emph{$\alpha$-dual} if $E_w$ satisfies $\alpha$-duality at every state $w \in W$.
	\end{definition}
	
	At the formula level, $\alpha$-duality validates the schema
	\[
	\Eff{C}\varphi
	\leftrightarrow
	\neg\Eff{\coalc}\neg\varphi.
	\]
	Indeed,
	\[
	\Mod,w \models \Eff{C}\varphi
	\quad\Longleftrightarrow\quad
	\sem{\varphi}_{\Mod} \in E_w(C),
	\]
	and by $\alpha$-duality this is equivalent to
	\[
	\overline{\sem{\varphi}_{\Mod}} \notin E_w(\coalc).
	\]
	Since
	\[
	\overline{\sem{\varphi}_{\Mod}}
	=
	\sem{\neg\varphi}_{\Mod},
	\]
	we obtain
	\[
	\Mod,w \models \neg\Eff{\coalc}\neg\varphi.
	\]
	Conversely, the validity of this schema for all formulas entails the set-theoretic condition only under a definability assumption, namely when every subset of $W$ is definable by some formula.
	
	The right-to-left direction of $\alpha$-duality expresses complement-determination: whenever $\coalc$ cannot force $\overline{X}$, coalition $C$ can force $X$. This property is characteristic of determined two-player zero-sum settings, but it fails in general playable models. In Matching Pennies, for example, each individual player is unable to force a match and also unable to force a mismatch.
	
	We treat $\alpha$-duality as an optional strengthening of playability throughout the paper. The structure of Full Inability is most distinctive when this duality fails, because the failure of $\alpha$-duality reveals the gap between simple non-ability and complete strategic neutralization.

\section{The Four-Fold Spectrum of Strategic Power}
\label{sec:classification}

The traditional binary view of coalitional power---having or lacking an ability---is too coarse for fine-grained analysis of multi-agent interaction. This section introduces a four-fold classification obtained by simultaneously evaluating a coalition's ability to enforce a formula and its negation.

\subsection{Defining the Four Categories}
\label{subsec:four-categories}

For a fixed coalition $C$ and formula $\varphi$, two basic questions determine coalitional status:
\begin{enumerate}[label=(\arabic*)]
	\item Can $C$ enforce $\varphi$?
	\item Can $C$ enforce $\neg\varphi$?
\end{enumerate}
The two answers generate an exhaustive and mutually exclusive classification.

\begin{definition}[The Power Spectrum]
	\label{def:four-fold}
	Let $C \subseteq N$ and $\varphi \in \Lang_{\CL}$. We define:
	\begin{enumerate}[label=(\arabic*)]
		\item \textbf{Full Control}:
		\[
		\FC_C(\varphi) \;\defeq\; \Eff{C}\varphi \;\wedge\; \Eff{C}\neg\varphi.
		\]
		Coalition $C$ has two-sided control: it can determine the truth value of $\varphi$ in either direction.
		
		\item \textbf{Positive Determination}:
		\[
		\PD_C(\varphi) \;\defeq\; \Eff{C}\varphi \;\wedge\; \neg\Eff{C}\neg\varphi.
		\]
		Coalition $C$ can guarantee $\varphi$ but cannot guarantee its falsity.
		
		\item \textbf{Adverse Determination}:
		\[
		\AD_C(\varphi) \;\defeq\; \neg\Eff{C}\varphi \;\wedge\; \Eff{C}\neg\varphi.
		\]
		Coalition $C$ can guarantee $\neg\varphi$ but cannot guarantee $\varphi$.
		
		\item \textbf{Full Inability}:
		\[
		\FI_C(\varphi) \;\defeq\; \neg\Eff{C}\varphi \;\wedge\; \neg\Eff{C}\neg\varphi.
		\]
		Coalition $C$ lacks deterministic control over $\varphi$: it can enforce neither $\varphi$ nor $\neg\varphi$.
	\end{enumerate}
\end{definition}

The four categories correspond to the following truth table, where 1 denotes satisfaction and 0 denotes non-satisfaction:
\[
\begin{array}{c|c|c}
	\Eff{C}\varphi & \Eff{C}\neg\varphi & \text{Category} \\
	\hline
	1 & 1 & \FC_C(\varphi) \\
	1 & 0 & \PD_C(\varphi) \\
	0 & 1 & \AD_C(\varphi) \\
	0 & 0 & \FI_C(\varphi)
\end{array}
\]

\begin{theorem}[Exhaustiveness and Mutual Exclusivity]
	\label{thm:partition}
	For any coalition model $\Mod$, state $w$, coalition $C$, and formula $\varphi$, exactly one of
	\[
	\FC_C(\varphi),\quad \PD_C(\varphi),\quad \AD_C(\varphi),\quad \FI_C(\varphi)
	\]
	holds at $w$.
\end{theorem}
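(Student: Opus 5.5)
The argument is purely propositional: no semantic property of the effectivity function is needed, not even playability, since the theorem quantifies over arbitrary coalition models. Fix $\Mod$, $w$, $C$ and $\varphi$, and introduce two Boolean abbreviations:
\[
a \defeq [\Mod,w \models \Eff{C}\varphi], \qquad b \defeq [\Mod,w \models \Eff{C}\neg\varphi].
\]
By the satisfaction clauses for $\neg$ and $\wedge$, each of the four spectrum formulas of Definition~\ref{def:four-fold} is true at $w$ exactly when $(a,b)$ takes one particular value:
\begin{itemize}
	\item $\FC_C(\varphi)$ holds iff $(a,b)=(1,1)$;
	\item $\PD_C(\varphi)$ holds iff $(a,b)=(1,0)$;
	\item $\AD_C(\varphi)$ holds iff $(a,b)=(0,1)$;
	\item $\FI_C(\varphi)$ holds iff $(a,b)=(0,0)$.
\end{itemize}
This is the truth table displayed before the theorem, now justified by unfolding the semantics.

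\textbf{Exhaustiveness.} Satisfaction at a state is bivalent, since $\Mod,w\models\neg\psi$ iff $\Mod,w\not\models\psi$. So $a$ and $b$ each take a value in $\{0,1\}$, and the pair $(a,b)$ is one of the four listed combinations. The corresponding category therefore holds at $w$.

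\textbf{Mutual exclusivity.} Any two distinct categories assign opposite required values to at least one of $a$ or $b$. The conjunction of the two would then require, for some $\psi\in\{\varphi,\neg\varphi\}$, both $\Mod,w\models\Eff{C}\psi$ and $\Mod,w\models\neg\Eff{C}\psi$, which is impossible by the clause for negation. Since there are only six pairs, they can be checked directly, or all at once via the observation that the four value-pairs are pairwise distinct.

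There is no real obstacle. The only point needing care is to make clear that the proof uses none of the playability conditions or $\alpha$-duality, so the result holds for all coalition models as stated.
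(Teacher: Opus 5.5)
Your proposal is correct and matches the paper's proof essentially verbatim: both reduce the four categories to the four truth-value assignments of the pair $(\Eff{C}\varphi, \Eff{C}\neg\varphi)$ at $w$ and conclude by classical bivalence. Your remark that no playability condition or $\alpha$-duality is used is accurate and consistent with the theorem being stated for arbitrary coalition models.
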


\begin{proof}
	Let
	\[
	\alpha = (\Mod,w \models \Eff{C}\varphi)
	\quad\text{and}\quad
	\beta = (\Mod,w \models \Eff{C}\neg\varphi).
	\]
	By classical bivalence, each of $\alpha$ and $\beta$ is either true or false. Hence exactly one of the four Boolean assignments
	\[
	(T,T),\quad (T,F),\quad (F,T),\quad (F,F)
	\]
	holds. By Definition~\ref{def:four-fold}, these four assignments correspond respectively to
	\[
	\FC_C(\varphi),\quad \PD_C(\varphi),\quad \AD_C(\varphi),\quad \FI_C(\varphi).
	\]
	Therefore exactly one of the four categories holds at $w$.
\end{proof}

\begin{proposition}[Negation Symmetry]
	\label{prop:negation-symmetry}
	For every coalition $C$ and formula $\varphi$, the following equivalences are valid:
	\begin{align*}
		\FC_C(\varphi) &\;\leftrightarrow\; \FC_C(\neg\varphi), \\
		\FI_C(\varphi) &\;\leftrightarrow\; \FI_C(\neg\varphi), \\
		\PD_C(\varphi) &\;\leftrightarrow\; \AD_C(\neg\varphi), \\
		\AD_C(\varphi) &\;\leftrightarrow\; \PD_C(\neg\varphi).
	\end{align*}
\end{proposition}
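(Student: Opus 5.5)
The plan is to unfold the definitions in Definition~\ref{def:four-fold} and reduce everything to one semantic fact: $\Eff{C}\neg\neg\varphi$ and $\Eff{C}\varphi$ are true at exactly the same states. This needs no playability assumption, so the proposition holds over arbitrary coalition models. Each of the four equivalences then follows by substituting $\neg\varphi$ for $\varphi$ in the relevant definition and swapping conjuncts.

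\textbf{Step 1 (key lemma).} For any model $\Mod$, the truth sets satisfy $\sem{\neg\neg\varphi}_{\Mod} = \overline{\overline{\sem{\varphi}_{\Mod}}} = \sem{\varphi}_{\Mod}$. This is immediate from the clause for negation. The clause for $\Eff{C}$ depends only on the truth set, since $\Mod,w\models\Eff{C}\psi$ iff $\sem{\psi}_{\Mod}\in E_w(C)$. Hence $\Mod,w\models \Eff{C}\neg\neg\varphi$ iff $\Mod,w\models\Eff{C}\varphi$, for every $w$. This is the only place where any care is needed. Because $\neg\neg\varphi$ is a syntactically distinct formula, we must argue semantically through truth sets, rather than invoke a replacement rule that has not yet been established.

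\textbf{Step 2 (unfolding).} By definition,
\[
\FC_C(\neg\varphi) = \Eff{C}\neg\varphi \wedge \Eff{C}\neg\neg\varphi
\quad\text{and}\quad
\FI_C(\neg\varphi) = \neg\Eff{C}\neg\varphi \wedge \neg\Eff{C}\neg\neg\varphi .
\]
Replacing $\Eff{C}\neg\neg\varphi$ by $\Eff{C}\varphi$ using Step 1 and commuting the conjuncts yields $\FC_C(\varphi)$ and $\FI_C(\varphi)$ respectively. Similarly,
\[
\AD_C(\neg\varphi) = \neg\Eff{C}\neg\varphi \wedge \Eff{C}\neg\neg\varphi ,
\]
which by Step 1 becomes $\Eff{C}\varphi\wedge\neg\Eff{C}\neg\varphi = \PD_C(\varphi)$. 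Likewise, $\PD_C(\neg\varphi)$ reduces to $\AD_C(\varphi)$.

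\textbf{Step 3 (packaging).} An alternative presentation uses the truth table preceding Theorem~\ref{thm:partition}. Replacing $\varphi$ by $\neg\varphi$ swaps the two columns $(\alpha,\beta)\mapsto(\beta,\alpha)$, by Step 1. This swap fixes the rows $(1,1)$ and $(0,0)$ and exchanges $(1,0)$ with $(0,1)$, which is exactly the content of the proposition. There is no real obstacle here. The only point to state explicitly is the truth-set argument in Step 1, so that the validity claim does not tacitly assume an extensionality rule.
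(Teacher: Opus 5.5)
Your proof is correct and follows the paper's own argument: it uses the truth-set identity $\sem{\neg\neg\varphi}_{\Mod}=\sem{\varphi}_{\Mod}$ to get $\Eff{C}\neg\neg\varphi$ equivalent to $\Eff{C}\varphi$ at every state, then unfolds each definition and commutes conjuncts. Your Step 3 truth-table view, where negation acts as the swap $(a,b)\mapsto(b,a)$, also matches how the paper later treats $f_{\mathsf{neg}}$ in the Klein four-group theorem.
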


\begin{proof}
	The equivalences follow by expanding the definitions and using classical truth-set semantics. For every model $\Mod$,
	\[
	\sem{\neg\neg\varphi}_{\Mod} = \sem{\varphi}_{\Mod}.
	\]
	Hence, for every coalition $C$ and state $w$,
	\[
	\Mod,w \models \Eff{C}\neg\neg\varphi
	\quad\Longleftrightarrow\quad
	\Mod,w \models \Eff{C}\varphi.
	\]
	For example,
	\begin{align*}
		\FC_C(\neg\varphi)
		&\equiv \Eff{C}\neg\varphi \wedge \Eff{C}\neg\neg\varphi \\
		&\equiv \Eff{C}\neg\varphi \wedge \Eff{C}\varphi \\
		&\equiv \FC_C(\varphi).
	\end{align*}
	Similarly,
	\begin{align*}
		\FI_C(\neg\varphi)
		&\equiv \neg\Eff{C}\neg\varphi \wedge \neg\Eff{C}\neg\neg\varphi \\
		&\equiv \neg\Eff{C}\neg\varphi \wedge \neg\Eff{C}\varphi \\
		&\equiv \FI_C(\varphi),
	\end{align*}
	and the two mixed cases give
	\[
	\PD_C(\varphi) \leftrightarrow \AD_C(\neg\varphi)
	\quad\text{and}\quad
	\AD_C(\varphi) \leftrightarrow \PD_C(\neg\varphi).
	\]
\end{proof}

\begin{proposition}[Full-Control--Full-Inability Polarity]
	\label{prop:fc-fi-polarity}
	In every playable coalition model, the following schema is valid:
	\[
	\FC_C(\varphi) \to \FI_{\coalc}(\varphi).
	\]
\end{proposition}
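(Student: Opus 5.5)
The plan is to unfold both sides into truth-set conditions and apply Regularity (Lemma~\ref{lem:regularity}) once for each conjunct of $\FC_C(\varphi)$. Fix a playable model $\Mod$, a state $w$, and write $X \defeq \sem{\varphi}_{\Mod}$. By the truth-set semantics, $\sem{\neg\varphi}_{\Mod} = \overline{X}$. The hypothesis $\Mod,w \models \FC_C(\varphi)$ then unfolds to the two memberships
\[
X \in E_w(C) \quad\text{and}\quad \overline{X} \in E_w(C).
\]
The goal $\Mod,w \models \FI_{\coalc}(\varphi)$ unfolds to the two non-memberships
\[
X \notin E_w(\coalc) \quad\text{and}\quad \overline{X} \notin E_w(\coalc).
\]

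For the second goal conjunct, I apply Lemma~\ref{lem:regularity} directly to $X \in E_w(C)$. This gives $\overline{X} \notin E_w(\coalc)$, which is exactly $\Mod,w \models \neg\Eff{\coalc}\neg\varphi$.

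For the first goal conjunct, I apply Lemma~\ref{lem:regularity} to the set $\overline{X}$ and coalition $C$. From $\overline{X} \in E_w(C)$ it yields $\overline{\overline{X}} \notin E_w(\coalc)$, that is, $X \notin E_w(\coalc)$. Here one uses only that complementation in $W$ is an involution. This is exactly $\Mod,w \models \neg\Eff{\coalc}\varphi$.

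Combining the two conjuncts gives $\Mod,w \models \FI_{\coalc}(\varphi)$. Since $\Mod$ and $w$ were arbitrary among playable models, the schema is valid.

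There is no substantive obstacle: the argument needs only Regularity, i.e.\ Superadditivity together with Liveness applied at $E_w(N)$. The one point to check is that Regularity is stated for arbitrary $X \subseteq W$, so it may be instantiated with $\overline{X}$, and that $\coalc$ is computed relative to $N$, so that $C \cup \coalc = N$ is what Regularity used.

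It is also worth recording why only one direction holds. The converse would require passing from non-membership to membership, which is the right-to-left half of $\alpha$-duality. Matching Pennies, mentioned after Definition~\ref{def:alpha-duality}, shows that this half fails in general playable models.
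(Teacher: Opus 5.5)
Your proof is correct and is the paper's argument: you unfold $\FC_C(\varphi)$ into $\sem{\varphi}_{\Mod} \in E_w(C)$ and $\overline{\sem{\varphi}_{\Mod}} \in E_w(C)$, then apply Regularity (Lemma~\ref{lem:regularity}) once to each membership, using $\overline{\overline{X}} = X$, to obtain the two conjuncts of $\FI_{\coalc}(\varphi)$. Your remark on the converse, that it needs the right-to-left half of $\alpha$-duality and fails in Matching Pennies, also matches the paper's remark.
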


\begin{proof}
	Assume $\Mod,w \models \FC_C(\varphi)$. Then
	\[
	\Mod,w \models \Eff{C}\varphi
	\quad\text{and}\quad
	\Mod,w \models \Eff{C}\neg\varphi.
	\]
	Equivalently,
	\[
	\sem{\varphi}_{\Mod} \in E_w(C)
	\quad\text{and}\quad
	\sem{\neg\varphi}_{\Mod} \in E_w(C).
	\]
	By Regularity (Lemma~\ref{lem:regularity}), the first inclusion yields
	\[
	\overline{\sem{\varphi}_{\Mod}} \notin E_w(\coalc).
	\]
	Since $\overline{\sem{\varphi}_{\Mod}}=\sem{\neg\varphi}_{\Mod}$, this means
	\[
	\Mod,w \not\models \Eff{\coalc}\neg\varphi.
	\]
	Similarly, from $\sem{\neg\varphi}_{\Mod} \in E_w(C)$ and Regularity we obtain
	\[
	\overline{\sem{\neg\varphi}_{\Mod}} \notin E_w(\coalc).
	\]
	Since $\overline{\sem{\neg\varphi}_{\Mod}}=\sem{\varphi}_{\Mod}$, this gives
	\[
	\Mod,w \not\models \Eff{\coalc}\varphi.
	\]
	Therefore
	\[
	\Mod,w \models
	\neg\Eff{\coalc}\varphi
	\wedge
	\neg\Eff{\coalc}\neg\varphi,
	\]
	that is,
	\[
	\Mod,w \models \FI_{\coalc}(\varphi).
	\]
\end{proof}

\begin{remark}
	Proposition~\ref{prop:fc-fi-polarity} establishes only one-way polarity. In arbitrary playable models, $\FI_{\coalc}(\varphi)$ does not entail $\FC_C(\varphi)$. The converse requires the stronger assumption of $\alpha$-duality. This asymmetry is one of the main reasons Full Inability cannot be reduced to the simple negation of ability.
\end{remark}

\subsection{Canonical Examples}
\label{subsec:case-studies}

The following examples illustrate the four categories via local strategic game forms.

\begin{example}[Dictatorship]
	\label{ex:dictatorship}
	At a state $w$, suppose player $1$ has two actions, one guaranteeing $p$ and one guaranteeing $\neg p$, while player $2$'s actions do not affect whether $p$ holds. Then
	\[
	\FC_{\{1\}}(p)
	\quad\text{and}\quad
	\FI_{\{2\}}(p)
	\]
	hold at $w$: player $1$ has two-sided control over $p$, while player $2$ has no standalone deterministic control over it.
\end{example}

\begin{example}[Matching Pennies]
	\label{ex:matching-pennies}
	Consider the standard Matching Pennies game, and let $p$ express that the two coins match. Each individual player $i \in \{1,2\}$ satisfies
	\[
	\FI_{\{i\}}(p):
	\]
	by choosing heads or tails alone, player $i$ cannot guarantee either a match or a mismatch, since the other player's action may change the outcome. By contrast, the grand coalition satisfies
	\[
	\FC_N(p),
	\]
	because the joint action $(H,H)$ guarantees a match, whereas the joint action $(H,T)$ guarantees a mismatch. Thus Full Inability at the individual level can coexist with Full Control at the grand-coalition level.
\end{example}

\begin{example}[Veto Power]
	\label{ex:veto}
	In a unanimous-consent board, let $p$ mean that a motion passes. Any individual member $i$ can block the motion, thereby enforcing $\neg p$, but cannot alone ensure that the motion passes, since all other members must consent. Hence
	\[
	\AD_{\{i\}}(p)
	\]
	holds.
\end{example}

\begin{example}[Positive Determination]
	\label{ex:pd}
	Let $p$ mean that an emergency shutdown occurs. Suppose monitor $i$ can trigger the shutdown, but other monitors can also independently trigger it. Then $i$ can guarantee $p$ by triggering the shutdown, but cannot guarantee $\neg p$, since another monitor may still trigger it. Therefore
	\[
	\PD_{\{i\}}(p)
	\]
	holds.
\end{example}

\subsection{Connection to Social Choice: Dummy Players}
\label{subsec:dummy-players}

In cooperative game theory and social choice, a \emph{dummy player} is one whose presence does not change the relevant coalitional outcome, such as a coalition's winning status \cite{BCG16, Kurz22}. We formulate a local, proposition-relative analogue within Coalition Logic.

\begin{definition}[Propositional Dummy Player]
	\label{def:dummy}
	Let $\Mod = (W,E,V)$ be a coalition model and $w \in W$. Agent $i \in N$ is a \emph{dummy player with respect to $\varphi$} at $w$ if for every $C \subseteq N \setminus \{i\}$,
	\[
	\Mod,w \models \Eff{C \cup \{i\}}\varphi
	\;\Longleftrightarrow\;
	\Mod,w \models \Eff{C}\varphi.
	\]
\end{definition}

\begin{proposition}[Dummyhood--Full-Inability Connection]
	\label{prop:dummy-fi}
	Let $\Mod = (W,E,V)$ be a coalition model, $w \in W$, and $i \in N$.
	\begin{enumerate}[label=(\arabic*)]
		\item If $i$ is a dummy player with respect to $\varphi$ at $w$ and
		\[
		\Mod,w \not\models \Eff{\emptyset}\varphi,
		\]
		then
		\[
		\Mod,w \not\models \Eff{\{i\}}\varphi.
		\]
		
		\item If $i$ is a dummy player with respect to both $\varphi$ and $\neg\varphi$ at $w$, and
		\[
		\Mod,w \not\models \Eff{\emptyset}\varphi
		\quad\text{and}\quad
		\Mod,w \not\models \Eff{\emptyset}\neg\varphi,
		\]
		then
		\[
		\Mod,w \models \FI_{\{i\}}(\varphi).
		\]
	\end{enumerate}
\end{proposition}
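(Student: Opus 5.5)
Both parts follow by instantiating the dummy condition of Definition~\ref{def:dummy} at the empty coalition. No playability assumption is needed, so the argument works in arbitrary coalition models, as the statement requires.

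For part (1), I would take $C = \emptyset$. This is a legitimate choice, since $\emptyset \subseteq N \setminus \{i\}$. The dummy condition then gives
\[
\Mod,w \models \Eff{\{i\}}\varphi \;\Longleftrightarrow\; \Mod,w \models \Eff{\emptyset}\varphi,
\]
because $\emptyset \cup \{i\} = \{i\}$. The hypothesis $\Mod,w \not\models \Eff{\emptyset}\varphi$ together with the right-to-left direction yields $\Mod,w \not\models \Eff{\{i\}}\varphi$. Only one direction of the biconditional is used.

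For part (2), I would apply part (1) twice:
\begin{itemize}
\item once to $\varphi$, using dummyhood of $i$ with respect to $\varphi$ and $\Mod,w \not\models \Eff{\emptyset}\varphi$, which gives $\Mod,w \not\models \Eff{\{i\}}\varphi$;
\item once to $\neg\varphi$, using dummyhood with respect to $\neg\varphi$ and $\Mod,w \not\models \Eff{\emptyset}\neg\varphi$, which gives $\Mod,w \not\models \Eff{\{i\}}\neg\varphi$.
\end{itemize}
Conjoining these two facts is exactly the defining clause of $\FI_{\{i\}}(\varphi)$ in Definition~\ref{def:four-fold}.

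There is no real obstacle. The only point needing care is to recognise that dummyhood is quantified over all $C \subseteq N \setminus \{i\}$, so the empty coalition is an admissible instance. It is also worth remarking why part (2) assumes dummyhood for $\neg\varphi$ separately: Definition~\ref{def:dummy} is stated formula by formula, and dummyhood for $\varphi$ does not automatically transfer to $\neg\varphi$ in general coalition models.
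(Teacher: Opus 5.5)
Your proof is correct and matches the paper's argument exactly: instantiate Definition~\ref{def:dummy} at $C=\emptyset$ for part (1), then apply part (1) to $\varphi$ and to $\neg\varphi$ for part (2). One small slip is that the direction you actually use is left-to-right, $\Eff{\{i\}}\varphi \Rightarrow \Eff{\emptyset}\varphi$ taken contrapositively, not right-to-left; this is harmless because the full biconditional is available.
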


\begin{proof}
	For (1), instantiate Definition~\ref{def:dummy} with $C=\emptyset$. Since $\emptyset \subseteq N\setminus\{i\}$, we have
	\[
	\Mod,w \models \Eff{\{i\}}\varphi
	\;\Longleftrightarrow\;
	\Mod,w \models \Eff{\emptyset}\varphi.
	\]
	The premise
	\[
	\Mod,w \not\models \Eff{\emptyset}\varphi
	\]
	therefore yields
	\[
	\Mod,w \not\models \Eff{\{i\}}\varphi.
	\]
	
	For (2), apply (1) first to $\varphi$ and then to $\neg\varphi$. We obtain
	\[
	\Mod,w \not\models \Eff{\{i\}}\varphi
	\quad\text{and}\quad
	\Mod,w \not\models \Eff{\{i\}}\neg\varphi.
	\]
	By Definition~\ref{def:four-fold}, this is exactly
	\[
	\Mod,w \models \FI_{\{i\}}(\varphi).
	\]
\end{proof}

\begin{remark}
	Full Inability is a necessary consequence of two-sided propositional dummyhood when the empty coalition cannot determine either side of the proposition. The converse does not hold: $\FI_{\{i\}}(\varphi)$ does not imply dummyhood. An agent may lack individual control over $\varphi$ while remaining pivotal in larger coalitions. Matching Pennies illustrates this: each player satisfies $\FI_{\{i\}}(p)$ individually, but each is strategically relevant when coordinating with the other. Full Inability therefore captures the absence of standalone deterministic control, whereas dummyhood is a stronger coalitional invariance property.
\end{remark}

	\section{Polarity, Symmetry, and Conditional Duality}
	\label{sec:duality-theory}
	
	This section examines structural relations among the four categories of the power spectrum. The central point is that the relation between Full Control and Full Inability is not a genuine duality in arbitrary playable models. Playability yields only a one-way polarity: if a coalition has Full Control, then its complement has Full Inability. A full dual equivalence emerges only under the stronger assumption of $\alpha$-duality.
	
	\subsection{One-Way Polarity: Control Entails Complementary Inability}
	\label{subsec:one-way-duality}
	
	In any playable coalition model, Full Control for a coalition entails Full Inability for its complementary coalition. This follows directly from regularity.
	
	\begin{theorem}[One-Way Polarity]
		\label{thm:one-way-duality}
		For any playable coalition model $\Mod$, state $w$, coalition $C \subseteq N$, and formula $\varphi$,
		\[
		\Mod,w \models \FC_C(\varphi)
		\quad\Longrightarrow\quad
		\Mod,w \models \FI_{\coalc}(\varphi).
		\]
	\end{theorem}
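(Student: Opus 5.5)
This theorem is a restatement, at the level of a single model and state, of Proposition~\ref{prop:fc-fi-polarity}. The plan is therefore to re-run that argument directly from Regularity (Lemma~\ref{lem:regularity}), or simply cite the proposition. Fix a playable model $\Mod$, a state $w$, a coalition $C$ and a formula $\varphi$, and assume $\Mod,w \models \FC_C(\varphi)$. By Definition~\ref{def:four-fold} and the truth clause for $\Eff{C}$, this unpacks to $\sem{\varphi}_{\Mod} \in E_w(C)$ and $\sem{\neg\varphi}_{\Mod} \in E_w(C)$.

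The two conjuncts of $\FI_{\coalc}(\varphi)$ are then obtained separately, each by one application of Regularity. That lemma is available because playability of $\Mod$ gives playability of $E_w$.

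\textbf{First conjunct.} From $\sem{\varphi}_{\Mod} \in E_w(C)$, Regularity gives $\overline{\sem{\varphi}_{\Mod}} \notin E_w(\coalc)$. Since $\overline{\sem{\varphi}_{\Mod}} = \sem{\neg\varphi}_{\Mod}$ by the truth clause for negation, this is $\Mod,w \not\models \Eff{\coalc}\neg\varphi$.

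\textbf{Second conjunct.} From $\sem{\neg\varphi}_{\Mod} \in E_w(C)$, Regularity gives $\overline{\sem{\neg\varphi}_{\Mod}} \notin E_w(\coalc)$. Since $\overline{\sem{\neg\varphi}_{\Mod}} = \sem{\varphi}_{\Mod}$, this is $\Mod,w \not\models \Eff{\coalc}\varphi$.

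Combining the two conjuncts yields $\Mod,w \models \FI_{\coalc}(\varphi)$.

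There is no real obstacle here. The only point needing care is the set identity $\overline{\sem{\neg\varphi}} = \sem{\varphi}$, which is immediate from classical semantics. It is also worth noting that the argument uses only Liveness and Superadditivity, through Regularity; Outcome Monotonicity and $N$-Maximality are not needed.
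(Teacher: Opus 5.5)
Your proposal is correct and takes essentially the same approach as the paper: the paper proves this theorem by citing Proposition~\ref{prop:fc-fi-polarity}, whose proof is exactly your two applications of Regularity (Lemma~\ref{lem:regularity}) to $\sem{\varphi}_{\Mod}$ and $\sem{\neg\varphi}_{\Mod}$. Your side remark is also accurate: only Liveness and Superadditivity are used, via Regularity.
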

	
	\begin{proof}
		This is Proposition~\ref{prop:fc-fi-polarity}, restated here to emphasize its role in the duality analysis.
	\end{proof}
	
	\subsection{The Matching Pennies Counterexample}
	\label{subsec:mp-counterexample}
	
	The converse of Theorem~\ref{thm:one-way-duality} fails in arbitrary playable models.
	
	\begin{observation}[Mutual Inability in Matching Pennies]
		\label{obs:matching-pennies-counterexample}
		Consider a two-player Matching Pennies game, and let $p$ express that the two coins match. Each individual player $i \in \{1,2\}$ satisfies
		\[
		\FI_{\{i\}}(p),
		\]
		since neither player can guarantee a match or a mismatch alone. However, neither individual player has Full Control:
		\[
		\Mod,w \not\models \FC_{\{1\}}(p)
		\quad\text{and}\quad
		\Mod,w \not\models \FC_{\{2\}}(p).
		\]
		Taking $C=\{2\}$ and hence $\coalc=\{1\}$, we have
		\[
		\Mod,w \models \FI_{\coalc}(p)
		\quad\text{but}\quad
		\Mod,w \not\models \FC_C(p).
		\]
		Thus the converse implication
		\[
		\FI_{\coalc}(p) \to \FC_C(p)
		\]
		is not valid in all playable models.
	\end{observation}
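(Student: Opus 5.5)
The observation is verified by fixing Matching Pennies as an explicit playable model and computing the relevant effectivity sets. Take $N=\{1,2\}$, $Act_1=Act_2=\{H,T\}$, and four outcome states $W=\{w_{HH},w_{HT},w_{TH},w_{TT}\}$, together with a root state $w$ whose game form has outcome function $o(a,b)=w_{ab}$. Let $V(p)=\{w_{HH},w_{TT}\}$, so that $\sem{p}_{\Mod}=\{w_{HH},w_{TT}\}$ and $\sem{\neg p}_{\Mod}=\{w_{HT},w_{TH}\}$. At the outcome states we may install any trivial game form, for instance one-action game forms looping to the state itself. Every $E_u$ is then induced by a strategic game form, so by Pauly's representation theorem the model is playable. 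To keep the argument self-contained, one can instead note directly that $\alpha$-effectivity from a game form satisfies Liveness, Safety, Outcome Monotonicity, Superadditivity and $N$-Maximality; this is the standard fact quoted in Section~\ref{subsec:playability}.

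Next we compute $E_w(\{1\})$. A set $X$ belongs to it iff there is an action $a$ with $\{w_{aH},w_{aT}\}\subseteq X$. Each such pair contains exactly one matching and one mismatching outcome. Hence neither $\sem{p}_{\Mod}$ nor $\sem{\neg p}_{\Mod}$ contains such a pair, and so $\Mod,w\models\neg\Eff{\{1\}}p\wedge\neg\Eff{\{1\}}\neg p$, which is $\FI_{\{1\}}(p)$. The symmetric computation with the columns $\{w_{Hb},w_{Tb}\}$ gives $\FI_{\{2\}}(p)$.

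The failure of Full Control then follows at once from Theorem~\ref{thm:partition}. Since $\FI_{\{i\}}(p)$ holds, no other category can hold, and in particular $\FC_{\{i\}}(p)$ fails for $i=1,2$. Setting $C=\{2\}$, so that $\coalc=\{1\}$, the model satisfies $\FI_{\coalc}(p)\wedge\neg\FC_C(p)$ at $w$. This refutes the validity of $\FI_{\coalc}(p)\to\FC_C(p)$ over playable models.

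The only step needing care is making sure the counterexample really is playable, rather than an arbitrary coalition model. That is why the construction derives every $E_u$ from a game form instead of specifying the effectivity functions by hand.
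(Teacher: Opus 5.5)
Your proof is correct and follows the paper's route: the paper justifies the observation only by noting that in Matching Pennies neither player can force a match or a mismatch alone, and you make this rigorous with an explicit model whose effectivity at every state is induced by a game form, so the model is playable; you also cover the outcome states, which the paper leaves implicit. One cosmetic slip: the root $w$ must itself be a state of the model, so $\sem{\neg p}_{\Mod}$ also contains $w$, but this changes nothing, since every row and column cell still contains a matching outcome lying outside $\sem{\neg p}_{\Mod}$.
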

	
	This example also clarifies the difference between Full Inability and the mere negation of Full Control:
	\[
	\FI_C(\varphi)
	=
	\neg\Eff{C}\varphi
	\land
	\neg\Eff{C}\neg\varphi,
	\]
	whereas
	\[
	\neg\FC_C(\varphi)
	=
	\neg\Eff{C}\varphi
	\lor
	\neg\Eff{C}\neg\varphi.
	\]
	Hence
	\[
	\FI_C(\varphi) \to \neg\FC_C(\varphi)
	\]
	is valid, but the converse is not. Full Inability captures complete non-determination with respect to the issue $\varphi$: coalition $C$ can enforce neither side of the issue.
	
	\subsection{Dual Equivalence under $\alpha$-Duality}
	\label{subsec:conditional-duality}
	
	Full equivalence between Full Inability and complementary Full Control holds under $\alpha$-duality.
	
	\begin{theorem}[Conditional Dual Equivalence]
		\label{thm:conditional-duality}
		Let $\Mod=(W,E,V)$ be a coalition model such that $E_w$ satisfies $\alpha$-duality for every $w \in W$. Then for every state $w$, coalition $C \subseteq N$, and formula $\varphi$,
		\[
		\Mod,w \models \FI_C(\varphi)
		\;\Longleftrightarrow\;
		\Mod,w \models \FC_{\coalc}(\varphi).
		\]
	\end{theorem}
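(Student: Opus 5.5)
The plan is to argue directly from the formula-level consequence of $\alpha$-duality recorded after Definition~\ref{def:alpha-duality}: in an $\alpha$-dual model the schema $\Eff{D}\psi \leftrightarrow \neg\Eff{\overline{D}}\neg\psi$ is valid for every coalition $D$ and formula $\psi$. I will instantiate it with $D=\coalc$, so that $\overline{D}=C$ because $N\setminus(N\setminus C)=C$, and with $\psi$ equal to $\varphi$ and to $\neg\varphi$.

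First, taking $\psi=\varphi$ gives
\[
\Mod,w \models \Eff{\coalc}\varphi \iff \Mod,w \models \neg\Eff{C}\neg\varphi .
\]
Second, taking $\psi=\neg\varphi$ gives $\Mod,w \models \Eff{\coalc}\neg\varphi$ iff $\Mod,w \models \neg\Eff{C}\neg\neg\varphi$. Since $\sem{\neg\neg\varphi}_{\Mod}=\sem{\varphi}_{\Mod}$, as already used in Proposition~\ref{prop:negation-symmetry}, this becomes
\[
\Mod,w \models \Eff{\coalc}\neg\varphi \iff \Mod,w \models \neg\Eff{C}\varphi .
\]

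Taking the conjunction of the two equivalences yields
\[
\Mod,w \models \Eff{\coalc}\varphi \wedge \Eff{\coalc}\neg\varphi
\iff
\Mod,w \models \neg\Eff{C}\varphi \wedge \neg\Eff{C}\neg\varphi .
\]
By Definition~\ref{def:four-fold}, the left side is $\FC_{\coalc}(\varphi)$ and the right side is $\FI_C(\varphi)$, which is the claim.

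Nothing here is genuinely hard. The only points needing care are the complement bookkeeping ($\overline{\coalc}=C$) and the double-negation step, which must be justified at the level of truth sets rather than syntax. An alternative is to run the same argument set-theoretically: write $X=\sem{\varphi}_{\Mod}$ and apply $\alpha$-duality at $w$ to the pairs $(\coalc, X)$ and $(\coalc,\overline{X})$, using $\overline{\overline{X}}=X$. This route avoids any definability concern.
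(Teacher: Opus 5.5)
Your proposal is correct and follows essentially the same route as the paper: instantiate the formula-level $\alpha$-duality schema at $\varphi$ and at $\neg\varphi$, remove the double negation via $\sem{\neg\neg\varphi}_{\Mod}=\sem{\varphi}_{\Mod}$, and conjoin the two resulting equivalences. The only difference is cosmetic: you instantiate the schema at $\coalc$ (using $\overline{\coalc}=C$), whereas the paper instantiates it at $C$ and then negates both sides of each equivalence.
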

	
	\begin{proof}
		By Definition~\ref{def:four-fold},
		\[
		\FI_C(\varphi)
		\equiv
		\neg\Eff{C}\varphi
		\land
		\neg\Eff{C}\neg\varphi.
		\]
		By the formula-level consequence of $\alpha$-duality,
		\[
		\Eff{C}\varphi
		\leftrightarrow
		\neg\Eff{\coalc}\neg\varphi.
		\]
		Therefore,
		\[
		\neg\Eff{C}\varphi
		\leftrightarrow
		\Eff{\coalc}\neg\varphi.
		\]
		Applying the same principle to $\neg\varphi$ gives
		\[
		\Eff{C}\neg\varphi
		\leftrightarrow
		\neg\Eff{\coalc}\neg\neg\varphi.
		\]
		Since $\neg\neg\varphi$ is semantically equivalent to $\varphi$, this yields
		\[
		\Eff{C}\neg\varphi
		\leftrightarrow
		\neg\Eff{\coalc}\varphi,
		\]
		and hence
		\[
		\neg\Eff{C}\neg\varphi
		\leftrightarrow
		\Eff{\coalc}\varphi.
		\]
		Substituting both equivalences into the definition of $\FI_C(\varphi)$, we obtain
		\[
		\FI_C(\varphi)
		\leftrightarrow
		\Eff{\coalc}\neg\varphi
		\land
		\Eff{\coalc}\varphi.
		\]
		By commutativity of conjunction, the right-hand side is exactly
		\[
		\FC_{\coalc}(\varphi).
		\]
		Thus
		\[
		\Mod,w \models \FI_C(\varphi)
		\;\Longleftrightarrow\;
		\Mod,w \models \FC_{\coalc}(\varphi).
		\]
	\end{proof}
	
	\begin{corollary}[Complementary Full Control and Full Inability]
		\label{cor:fc-fi-alpha}
		Under $\alpha$-duality, for every coalition $C$ and formula $\varphi$,
		\[
		\FC_C(\varphi)
		\leftrightarrow
		\FI_{\coalc}(\varphi).
		\]
	\end{corollary}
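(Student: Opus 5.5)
The corollary follows from Theorem~\ref{thm:conditional-duality} by substituting the complementary coalition for $C$. I will apply that theorem to the coalition $D \defeq \coalc$. Since the theorem holds for every coalition under $\alpha$-duality, it gives, for every state $w$ and formula $\varphi$,
\[
\Mod,w \models \FI_{\coalc}(\varphi)
\;\Longleftrightarrow\;
\Mod,w \models \FC_{\overline{\coalc}}(\varphi).
\]

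The only extra ingredient is the set-theoretic identity $\overline{\coalc} = N \setminus (N \setminus C) = C$, which holds because $C \subseteq N$. Substituting it turns the right-hand side into $\FC_C(\varphi)$. Reading the resulting equivalence right to left gives $\FC_C(\varphi) \leftrightarrow \FI_{\coalc}(\varphi)$ at every state of every $\alpha$-dual model, which is the claimed schema.

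As a cross-check, one direction needs no $\alpha$-duality: $\FC_C(\varphi) \to \FI_{\coalc}(\varphi)$ is already Theorem~\ref{thm:one-way-duality} in playable models. In that setting $\alpha$-duality is only needed for the converse.

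There is no real obstacle here. The one point to state carefully is the bookkeeping of the instantiation: Theorem~\ref{thm:conditional-duality} is universally quantified over coalitions, and the complement operation $C \mapsto \coalc$ is an involution on $\powerset(N)$, so instantiating at $\coalc$ is legitimate.
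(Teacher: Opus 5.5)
Your proposal is correct and is exactly the paper's proof: instantiate Theorem~\ref{thm:conditional-duality} at the complementary coalition $\coalc$ and use $\overline{\coalc}=C$. Your cross-check via Theorem~\ref{thm:one-way-duality} is a valid side remark for playable models, but the argument does not need it.
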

	
	\begin{proof}
		Apply Theorem~\ref{thm:conditional-duality} to the complementary coalition $\coalc$. Since $\overline{\coalc}=C$, we obtain
		\[
		\FI_{\coalc}(\varphi)
		\leftrightarrow
		\FC_C(\varphi).
		\]
	\end{proof}
	
	\subsection{A Conditional Klein Four-Group Symmetry}
	\label{subsec:klein-group}
	
	Two natural transformations act on coalition--formula pairs:
	\[
	f_{\mathsf{neg}}(C,\varphi) = (C,\neg\varphi)
	\]
	and
	\[
	f_{\mathsf{comp}}(C,\varphi) = (\coalc,\varphi).
	\]
	Their composition is
	\[
	f_{\mathsf{both}}
	=
	f_{\mathsf{neg}}\circ f_{\mathsf{comp}}.
	\]
	Strictly speaking, $f_{\mathsf{neg}}^2(C,\varphi)=(C,\neg\neg\varphi)$ is identical to $(C,\varphi)$ only up to semantic equivalence. Thus the following symmetry should be understood as an action on category labels, or on formulas modulo classical semantic equivalence.
	
	Let
	\[
	a = (\Mod,w \models \Eff{C}\varphi)
	\quad\text{and}\quad
	b = (\Mod,w \models \Eff{C}\neg\varphi).
	\]
	The four categories correspond to the Boolean pairs:
	\[
	\FC=(T,T),\qquad
	\PD=(T,F),\qquad
	\AD=(F,T),\qquad
	\FI=(F,F).
	\]
	
	\begin{theorem}[Conditional Klein Four-Group Symmetry]
		\label{thm:klein-symmetry}
		In $\alpha$-dual coalition models, the transformations
		\[
		\mathrm{id},\quad
		f_{\mathsf{neg}},\quad
		f_{\mathsf{comp}},\quad
		f_{\mathsf{both}}
		\]
		induce permutations of the four power-spectrum categories. These permutations form a group isomorphic to the Klein four-group $V_4$.
	\end{theorem}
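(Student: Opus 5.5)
The plan is to compute, for each transformation, how the Boolean pair $(a,b)$ attached to $(C,\varphi)$ changes, and then to check that the resulting maps on the four labels $\{\FC,\PD,\AD,\FI\}$ behave like $\mathbb{Z}_2\times\mathbb{Z}_2$.

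\textbf{Step 1: negation.} For $f_{\mathsf{neg}}$ I will use Proposition~\ref{prop:negation-symmetry} directly. Passing from $\varphi$ to $\neg\varphi$ sends $(a,b)$ to $(b,a)$, since $\sem{\neg\neg\varphi}=\sem{\varphi}$. Hence $f_{\mathsf{neg}}$ induces the permutation
\[
\FC\mapsto\FC,\quad \FI\mapsto\FI,\quad \PD\mapsto\AD,\quad \AD\mapsto\PD.
\]
This step needs no duality assumption.

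\textbf{Step 2: complementation.} For $f_{\mathsf{comp}}$ I will invoke the formula-level schema $\Eff{C}\varphi\leftrightarrow\neg\Eff{\coalc}\neg\varphi$ from Section~\ref{subsec:alpha-duality}. I apply it twice, once to $\varphi$ and once to $\neg\varphi$, exactly as in the proof of Theorem~\ref{thm:conditional-duality}. This gives
\[
\Eff{\coalc}\varphi\leftrightarrow\neg b
\quad\text{and}\quad
\Eff{\coalc}\neg\varphi\leftrightarrow\neg a,
\]
so complementation sends $(a,b)$ to $(\neg b,\neg a)$. Reading off the labels: $\FC\leftrightarrow\FI$, which is Corollary~\ref{cor:fc-fi-alpha}, while $\PD=(T,F)\mapsto(T,F)=\PD$ and $\AD\mapsto\AD$ are fixed.

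\textbf{Step 3: the composite.} Composing the two maps, $f_{\mathsf{both}}$ sends $(a,b)$ to $(\neg a,\neg b)$. It therefore swaps $\FC\leftrightarrow\FI$ and $\PD\leftrightarrow\AD$.

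\textbf{Step 4: the group structure.} The set $\{\mathrm{id},\sigma_{\mathsf{neg}},\sigma_{\mathsf{comp}},\sigma_{\mathsf{both}}\}$ of induced permutations of the four labels is closed under composition. Each element is an involution, the three non-identity elements are distinct, and the product of any two distinct non-identity elements is the third. A group of order four in which every element has order at most two is $V_4\cong\mathbb{Z}_2\times\mathbb{Z}_2$. Concretely, the isomorphism sends $\sigma_{\mathsf{neg}}\mapsto(1,0)$ and $\sigma_{\mathsf{comp}}\mapsto(0,1)$.

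\textbf{Main obstacle: well-definedness.} The delicate point is that "induces a permutation of the categories" must be well defined. The label of $f(C,\varphi)$ at $w$ has to depend only on the label of $(C,\varphi)$ at $w$, not on $C$ or $\varphi$ themselves. For $f_{\mathsf{comp}}$ this requires $\alpha$-duality at $w$; without it, Matching Pennies (Observation~\ref{obs:matching-pennies-counterexample}) shows that $\FI$ at $C$ does not determine the label at $\coalc$.

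Two further details need care. First, the fact $\overline{\coalc}=C$ makes $f_{\mathsf{comp}}$ an involution on the nose. Second, $f_{\mathsf{neg}}^2$ is the identity only up to semantic equivalence, as the paper notes. I will therefore argue entirely at the level of label permutations, using $\sem{\neg\neg\varphi}=\sem{\varphi}$, rather than at the level of syntactic pairs. I will also observe that $f_{\mathsf{neg}}$ and $f_{\mathsf{comp}}$ commute on labels, so the order of composition in $f_{\mathsf{both}}$ is immaterial.
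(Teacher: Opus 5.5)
Your proposal is correct and follows the paper's proof essentially step for step. You use the same coordinate actions: $(a,b)\mapsto(b,a)$ for negation, $(a,b)\mapsto(\neg b,\neg a)$ for complementation (via the formula-level $\alpha$-duality schema applied to $\varphi$ and $\neg\varphi$), and $(a,b)\mapsto(\neg a,\neg b)$ for the composite, followed by the same involution-and-commutation argument for $V_4$; your explicit remarks on well-definedness of the induced label map and on the distinctness of the three non-identity permutations make precise points the paper leaves implicit (the former is discussed in Section~\ref{subsec:failure-complementation}).
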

	
	\begin{proof}
		First consider $f_{\mathsf{neg}}$. Replacing $\varphi$ by $\neg\varphi$ transforms the Boolean coordinates as
		\[
		(a,b) \mapsto (b,a),
		\]
		because the first coordinate becomes $\Eff{C}\neg\varphi$ and the second becomes $\Eff{C}\neg\neg\varphi$, which is semantically equivalent to $\Eff{C}\varphi$. Hence $f_{\mathsf{neg}}$ fixes $\FC$ and $\FI$, and exchanges $\PD$ and $\AD$.
		
		Next consider $f_{\mathsf{comp}}$. Under $\alpha$-duality,
		\[
		\Eff{\coalc}\varphi
		\leftrightarrow
		\neg\Eff{C}\neg\varphi
		\]
		and
		\[
		\Eff{\coalc}\neg\varphi
		\leftrightarrow
		\neg\Eff{C}\varphi.
		\]
		Therefore the coordinates of $(\coalc,\varphi)$ are
		\[
		(\neg b,\neg a).
		\]
		Thus $f_{\mathsf{comp}}$ acts as
		\[
		(a,b)\mapsto(\neg b,\neg a).
		\]
		It exchanges $\FC$ and $\FI$, while fixing $\PD$ and $\AD$.
		
		Finally,
		\[
		f_{\mathsf{both}}
		=
		f_{\mathsf{neg}}\circ f_{\mathsf{comp}}
		\]
		acts as
		\[
		(a,b)\mapsto(\neg a,\neg b).
		\]
		Hence it exchanges $\FC$ with $\FI$, and exchanges $\PD$ with $\AD$.
		
		The induced action on category labels is summarized in Table~\ref{tab:klein-action}. Each non-identity transformation is involutive, and the two generators commute:
		\[
		f_{\mathsf{neg}}^2
		=
		f_{\mathsf{comp}}^2
		=
		\mathrm{id},
		\qquad
		f_{\mathsf{neg}}\circ f_{\mathsf{comp}}
		=
		f_{\mathsf{comp}}\circ f_{\mathsf{neg}}.
		\]
		Consequently, the four transformations form a group isomorphic to the Klein four-group $V_4$.
	\end{proof}
	
	\begin{table}[ht]
		\centering
		\begin{tabular}{c|cccc}
			\toprule
			Transform & $\FC$ & $\PD$ & $\AD$ & $\FI$ \\
			\midrule
			$f_{\mathsf{neg}}$ & $\FC$ & $\AD$ & $\PD$ & $\FI$ \\
			$f_{\mathsf{comp}}$ & $\FI$ & $\PD$ & $\AD$ & $\FC$ \\
			$f_{\mathsf{both}} = f_{\mathsf{neg}} \circ f_{\mathsf{comp}}$ & $\FI$ & $\AD$ & $\PD$ & $\FC$ \\
			\bottomrule
		\end{tabular}
		\caption{Action of transformations on power-spectrum categories under $\alpha$-duality.}
		\label{tab:klein-action}
	\end{table}
	
	\subsection{Failure of Category-Level Complementation without $\alpha$-Duality}
	\label{subsec:failure-complementation}
	
	In arbitrary playable models, the above group action breaks down. The transformation $f_{\mathsf{neg}}$ remains well defined at the level of category labels, since negation merely swaps the two coordinates:
	\[
	(a,b)\mapsto(b,a).
	\]
	By contrast, $f_{\mathsf{comp}}$ need not induce a permutation of category labels. Without $\alpha$-duality, the category of $(\coalc,\varphi)$ is not determined solely by the category of $(C,\varphi)$.
	
	Playability provides only the regularity constraints
	\[
	\Eff{C}\varphi
	\Rightarrow
	\neg\Eff{\coalc}\neg\varphi
	\]
	and
	\[
	\Eff{C}\neg\varphi
	\Rightarrow
	\neg\Eff{\coalc}\varphi.
	\]
	Consequently,
	\[
	\FC_C(\varphi)
	\Rightarrow
	\FI_{\coalc}(\varphi),
	\]
	but the converse need not hold.
	
	More generally, the possible complementary categories are constrained only partially:
	\begin{itemize}
		\item If $\FC_C(\varphi)$ holds, then $\FI_{\coalc}(\varphi)$ must hold.
		
		\item If $\PD_C(\varphi)$ holds, then $\coalc$ cannot enforce $\neg\varphi$, but may or may not enforce $\varphi$. Thus $\coalc$ is in $\PD$ or $\FI$.
		
		\item If $\AD_C(\varphi)$ holds, then $\coalc$ cannot enforce $\varphi$, but may or may not enforce $\neg\varphi$. Thus $\coalc$ is in $\AD$ or $\FI$.
		
		\item If $\FI_C(\varphi)$ holds, playability alone imposes no category-level constraint on $\coalc$.
	\end{itemize}
	
	This failure of coalition complementation to induce a category-level permutation is precisely the structural gap between merely playable models and complement-determined, $\alpha$-dual models.

	\section{Geometrical and Lattice-Theoretic Properties}
	\label{sec:structural-properties}
	
	This section moves from formula-level classifications to a set-theoretic characterization within the powerset lattice $(\powerset(W),\subseteq)$. By identifying each formula $\varphi$ with its truth set
	\[
	\sem{\varphi}_{\Mod}
	=
	\{v \in W \mid \Mod,v \models \varphi\},
	\]
	the four power categories appear as order-theoretic regions in the Boolean algebra $\powerset(W)$. A central result is the convexity theorem: each region is order-convex. In particular, Full Inability forms an interval-stable region---if two outcome sets lie in the Full Inability region, then every intermediate set (in the subset order) also lies in that region. We formalize this property as \emph{Strategic Contiguity}.
	
	\subsection{The Order-Theoretic Polarity of Effectivity}
	\label{subsec:effectivity-regions}
	
	Fix a playable coalition model $\Mod=(W,E,V)$, a state $w \in W$, and a coalition $C \subseteq N$. Recall that $E_w(C) \subseteq \powerset(W)$ denotes the family of outcome sets enforceable by $C$ at $w$. Define the co-effectivity region of $C$ by
	\[
	E_w^*(C)
	\defeq
	\{X \subseteq W \mid \overline{X} \in E_w(C)\},
	\]
	where $\overline{X}=W\setminus X$.
	
	Intuitively, $E_w(C)$ is the region of outcome sets that $C$ can enforce, while $E_w^*(C)$ is the region of outcome sets whose complements $C$ can enforce. Thus $X \in E_w^*(C)$ means that $C$ can enforce the failure of $X$.
	
	\begin{lemma}[Order Polarity of Effectivity Regions]
		\label{lem:lattice-polarity}
		For any playable coalition model $\Mod$, state $w$, and coalition $C \subseteq N$:
		\begin{enumerate}[label=(\arabic*)]
			\item $E_w(C)$ is upward closed in $(\powerset(W),\subseteq)$: if $X \in E_w(C)$ and $X \subseteq Y$, then $Y \in E_w(C)$.
			
			\item $E_w^*(C)$ is downward closed in $(\powerset(W),\subseteq)$: if $X \in E_w^*(C)$ and $Y \subseteq X$, then $Y \in E_w^*(C)$.
		\end{enumerate}
	\end{lemma}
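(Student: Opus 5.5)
Part (1) is just Outcome Monotonicity from Definition~\ref{def:playability}, clause (iii), stated in order-theoretic language. Given $X \in E_w(C)$ and $X \subseteq Y \subseteq W$, that clause yields $Y \in E_w(C)$ directly, so no further argument is needed. The only point to note is that upward closure in $(\powerset(W),\subseteq)$ means exactly this implication for all pairs $X \subseteq Y$.

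For part (2), I would reduce the claim to part (1) via complementation. Complementation $X \mapsto \overline{X}$ is an order-reversing involution on $\powerset(W)$, and $E_w^*(C)$ is by definition the preimage of $E_w(C)$ under this map. Concretely, suppose $X \in E_w^*(C)$ and $Y \subseteq X$. The definition of $E_w^*(C)$ gives $\overline{X} \in E_w(C)$. From $Y \subseteq X$ we get $\overline{X} \subseteq \overline{Y}$, since $W \setminus X \subseteq W \setminus Y$. Part (1) then gives $\overline{Y} \in E_w(C)$, which by definition means $Y \in E_w^*(C)$.

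No step here is a real obstacle; the only place requiring care is the direction of the inclusion after complementing, $Y \subseteq X \Rightarrow \overline{X} \subseteq \overline{Y}$. I would also remark that only Outcome Monotonicity is used, so the lemma holds for any monotone effectivity function, not just playable ones. This observation is worth recording because the later convexity results will combine this up-set/down-set polarity: each power category is an intersection of an up-set or its complement with a down-set or its complement. For example, $\FI$ corresponds to $\powerset(W)\setminus(E_w(C)\cup E_w^*(C))$, the intersection of a down-set with an up-set, and such an intersection is order-convex.
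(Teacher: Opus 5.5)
Your proof is correct and follows the paper's argument exactly: part (1) is Outcome Monotonicity restated, and part (2) passes to complements via $Y \subseteq X \Rightarrow \overline{X} \subseteq \overline{Y}$ and then applies the upward closure of $E_w(C)$. Your observation that only outcome monotonicity is needed also matches the paper's own remark following the Convexity Theorem.
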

	
	\begin{proof}
		Part~(1) is outcome monotonicity for playable effectivity functions.
		
		For~(2), suppose $X \in E_w^*(C)$ and $Y \subseteq X$. By definition of $E_w^*(C)$, we have $\overline{X} \in E_w(C)$. Since $Y \subseteq X$, it follows that $\overline{X} \subseteq \overline{Y}$. By upward closure of $E_w(C)$, we obtain $\overline{Y} \in E_w(C)$. Hence $Y \in E_w^*(C)$.
	\end{proof}
	
	The four power categories induce four regions in the powerset lattice:
	\begin{align*}
		\Rfc^{w}(C)
		&\defeq
		E_w(C) \cap E_w^*(C), \\
		\Rpd^{w}(C)
		&\defeq
		E_w(C) \setminus E_w^*(C), \\
		\Rad^{w}(C)
		&\defeq
		E_w^*(C) \setminus E_w(C), \\
		\Rfi^{w}(C)
		&\defeq
		\powerset(W) \setminus \bigl(E_w(C) \cup E_w^*(C)\bigr).
	\end{align*}
	Consequently, for any formula $\varphi$,
	\[
	\Mod,w \models \FC_C(\varphi)
	\quad\Longleftrightarrow\quad
	\sem{\varphi}_{\Mod} \in \Rfc^{w}(C),
	\]
	and analogously for $\PD_C(\varphi)$, $\AD_C(\varphi)$, and $\FI_C(\varphi)$.
	
	Figure~\ref{fig:lattice-power-regions} gives a schematic Venn-style visualization of this decomposition.
	
	\begin{figure}[ht]
		\centering
		\begin{tikzpicture}[scale=1.05, every node/.style={font=\small}]
			\fill[gray!10, rounded corners] (-4.2,-2.35) rectangle (4.2,2.35);
			\draw[thick, rounded corners] (-4.2,-2.35) rectangle (4.2,2.35);
			\node[anchor=north east] at (4.05,2.22) {$\powerset(W)$};
			
			\fill[blue!18] (-1.15,0) ellipse (2.15 and 1.35);
			\fill[red!18] (1.15,0) ellipse (2.15 and 1.35);
			
			\begin{scope}
				\clip (-1.15,0) ellipse (2.15 and 1.35);
				\fill[purple!28] (1.15,0) ellipse (2.15 and 1.35);
			\end{scope}
			
			\draw[blue!70!black, thick] (-1.15,0) ellipse (2.15 and 1.35);
			\draw[red!70!black, thick] (1.15,0) ellipse (2.15 and 1.35);
			
			\node[blue!70!black] at (-2.55,1.55) {$E_w(C)$};
			\node[red!70!black] at (2.55,1.55) {$E_w^*(C)$};
			
			\node[align=center] at (0,0.18)
			{$\Rfc^{w}(C)$\\[-1pt]\scriptsize Full Control};
			
			\node[align=center] at (-2.05,-0.15)
			{$\Rpd^{w}(C)$\\[-1pt]\scriptsize Positive Determination};
			
			\node[align=center] at (2.05,-0.15)
			{$\Rad^{w}(C)$\\[-1pt]\scriptsize Adverse Determination};
			
			\node[align=center] at (0,-1.75)
			{$\Rfi^{w}(C)$\\[-1pt]\scriptsize Full Inability};
			
			\draw[->, blue!70!black, thick] (-2.55,1.35) -- (-2.15,0.85);
			\draw[->, red!70!black, thick] (2.55,1.35) -- (2.15,0.85);
		\end{tikzpicture}
		\caption{Schematic Venn-style decomposition of $\powerset(W)$ into four strategic regions. Note: $E_w(C)$ is upward closed, whereas $E_w^*(C)$ is downward closed.}
		\label{fig:lattice-power-regions}
	\end{figure}

	\subsection{A Strategic Bilattice Representation}
	\label{subsec:strategic-bilattice}
	
	The four-fold spectrum admits a natural algebraic representation as a Belnap--Dunn style bilattice. This formalization makes explicit that two basic monotonicities in Coalition Logic operate independently: outcome inclusion moves strategic values along a \emph{directionality order}, whereas coalition expansion moves them along a \emph{determination order}.
	
	Fix a playable coalition model $\Mod=(W,E,V)$, a state $w \in W$, and a coalition $C \subseteq N$. For every outcome set $X \subseteq W$, define the \emph{strategic value} of $X$ for $C$ at $w$ by
	\[
	\nu_C^w(X)
	\defeq
	(a_C^w(X),b_C^w(X))
	\in \{0,1\}^2,
	\]
	where
	\[
	a_C^w(X)=1
	\quad\Longleftrightarrow\quad
	X \in E_w(C),
	\]
	and
	\[
	b_C^w(X)=1
	\quad\Longleftrightarrow\quad
	\overline{X} \in E_w(C).
	\]
	The four strategic categories correspond to:
	\[
	\FC=(1,1),
	\qquad
	\PD=(1,0),
	\qquad
	\AD=(0,1),
	\qquad
	\FI=(0,0).
	\]
	
	\begin{definition}[Strategic Bilattice Orders]
		\label{def:strategic-bilattice-orders}
		Let
		\[
		\mathbb{B}_{\mathsf{str}}
		=
		\{\FI,\PD,\AD,\FC\}
		\cong
		\{0,1\}^2.
		\]
		We define two partial orders:
		\begin{enumerate}[label=(\roman*)]
			\item The \emph{determination order} $\leq_k$:
			\[
			(a,b) \leq_k (a',b')
			\quad\text{iff}\quad
			a \leq a'
			\text{ and }
			b \leq b'.
			\]
			
			\item The \emph{directionality order} $\leq_t$:
			\[
			(a,b) \leq_t (a',b')
			\quad\text{iff}\quad
			a \leq a'
			\text{ and }
			b' \leq b.
			\]
		\end{enumerate}
	\end{definition}
	
	Under the determination order,
	\[
	\FI \leq_k \PD,\AD \leq_k \FC.
	\]
	Thus $\FI$ is least determined, $\FC$ is maximally determined, and $\PD,\AD$ are one-sided. Under the directionality order,
	\[
	\AD \leq_t \FI,\FC \leq_t \PD.
	\]
	Here $\AD$ is maximally falsity-directed, $\PD$ is maximally truth-directed, while $\FI$ and $\FC$ are directionally neutral.
	
	As an ordered bilattice, $(\mathbb{B}_{\mathsf{str}},\leq_k,\leq_t)$ is isomorphic to the standard Belnap--Dunn bilattice $\mathcal{FOUR}$. Under this isomorphism, $\PD$ corresponds to truth, $\AD$ to falsity, $\FC$ to overdetermination, and $\FI$ to underdetermination. This is an order-theoretic correspondence; no additional Belnap--Dunn semantic interpretation is assumed.
	
	\begin{theorem}[Bimonotonicity of Strategic Values]
		\label{thm:bimonotonicity}
		Let $\Mod$ be a playable coalition model, $w \in W$, $C,D \subseteq N$, and $X,Y \subseteq W$.
		\begin{enumerate}[label=(\arabic*)]
			\item If $X \subseteq Y$, then
			\[
			\nu_C^w(X) \leq_t \nu_C^w(Y).
			\]
			
			\item If $C \subseteq D$, then
			\[
			\nu_C^w(X) \leq_k \nu_D^w(X).
			\]
		\end{enumerate}
	\end{theorem}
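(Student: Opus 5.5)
The plan is to reduce both parts to the two monotonicity properties of playable effectivity functions already recorded: Outcome Monotonicity (Definition~\ref{def:playability}(iii), equivalently Lemma~\ref{lem:lattice-polarity}(1)) and Coalition Monotonicity (Lemma~\ref{lem:coalition-monotonicity}). Since $\leq_t$ and $\leq_k$ are defined coordinatewise on $\{0,1\}^2$, each part splits into two independent Boolean inequalities, one for each coordinate.

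For part~(1), assume $X \subseteq Y$. We must show $a_C^w(X) \leq a_C^w(Y)$ and $b_C^w(Y) \leq b_C^w(X)$.

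\begin{itemize}
\item \emph{First coordinate.} If $a_C^w(X)=1$, then $X \in E_w(C)$. Upward closure gives $Y \in E_w(C)$, so $a_C^w(Y)=1$.
\item \emph{Second coordinate, with the order reversed.} If $b_C^w(Y)=1$, then $\overline{Y} \in E_w(C)$. Since $X \subseteq Y$ implies $\overline{Y} \subseteq \overline{X}$, upward closure gives $\overline{X} \in E_w(C)$, so $b_C^w(X)=1$. Equivalently, $Y \in E_w^*(C)$ together with the downward closure of $E_w^*(C)$ (Lemma~\ref{lem:lattice-polarity}(2)) gives $X \in E_w^*(C)$.
\end{itemize}

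Together these are exactly $\nu_C^w(X) \leq_t \nu_C^w(Y)$.

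For part~(2), assume $C \subseteq D$. Lemma~\ref{lem:coalition-monotonicity} gives $E_w(C) \subseteq E_w(D)$.

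\begin{itemize}
\item \emph{First coordinate.} If $a_C^w(X)=1$, then $X \in E_w(C) \subseteq E_w(D)$, so $a_D^w(X)=1$.
\item \emph{Second coordinate.} If $b_C^w(X)=1$, then $\overline{X} \in E_w(C) \subseteq E_w(D)$, so $b_D^w(X)=1$.
\end{itemize}

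Both coordinates are nondecreasing, which is the determination order $\nu_C^w(X) \leq_k \nu_D^w(X)$.

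There is no substantive obstacle. The only point requiring care is orientation in part~(1): the $b$-coordinate must be shown \emph{antitone} in the outcome set, matching the reversed clause $b' \leq b$ in the definition of $\leq_t$. This antitonicity comes from complementation reversing inclusion.

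Note also that part~(2) uses playability only through Lemma~\ref{lem:coalition-monotonicity}, which rests on Safety and Superadditivity. Part~(1) uses only Outcome Monotonicity. So in the write-up I would state explicitly which playability clauses each part depends on.
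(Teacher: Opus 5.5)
Your proposal is correct and matches the paper's proof essentially step for step. Part (1) uses upward closure for the $a$-coordinate, and complement reversal plus upward closure for the antitone $b$-coordinate; part (2) applies Lemma~\ref{lem:coalition-monotonicity} to both $X$ and $\overline{X}$. Your accounting of which playability clauses each part uses is also accurate.
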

	
	\begin{proof}
		For~(1), suppose $X \subseteq Y$. If $a_C^w(X)=1$, then $X \in E_w(C)$. Since $E_w(C)$ is upward closed, $Y \in E_w(C)$, so $a_C^w(Y)=1$. Hence
		\[
		a_C^w(X) \leq a_C^w(Y).
		\]
		For the second coordinate, suppose $b_C^w(Y)=1$. Then $\overline{Y} \in E_w(C)$. Since $X \subseteq Y$, we have $\overline{Y} \subseteq \overline{X}$. By upward closure, $\overline{X} \in E_w(C)$, hence $b_C^w(X)=1$. Thus
		\[
		b_C^w(Y) \leq b_C^w(X).
		\]
		Therefore $\nu_C^w(X) \leq_t \nu_C^w(Y)$.
		
		For~(2), by Coalition Monotonicity (Lemma~\ref{lem:coalition-monotonicity}), $E_w(C) \subseteq E_w(D)$. Hence if $X \in E_w(C)$, then $X \in E_w(D)$, and if $\overline{X} \in E_w(C)$, then $\overline{X} \in E_w(D)$. Both coordinates are non-decreasing, so
		\[
		\nu_C^w(X) \leq_k \nu_D^w(X).
		\]
	\end{proof}
	
	The first part of Theorem~\ref{thm:bimonotonicity} constrains how strategic values may change under outcome-set inclusion.
	
	\begin{table}[ht]
		\centering
		\begin{tabular}{c|c}
			\toprule
			Initial value of $X$ & Possible values of $Y$ when $X \subseteq Y$ \\
			\midrule
			$\AD$ & $\AD, \FI, \FC, \PD$ \\
			$\FI$ & $\FI, \PD$ \\
			$\FC$ & $\FC, \PD$ \\
			$\PD$ & $\PD$ \\
			\bottomrule
		\end{tabular}
		\caption{Strategic value migration under outcome inclusion.}
		\label{tab:outcome-migration}
	\end{table}
	
	The second part constrains how strategic values may change under coalition expansion.
	
	\begin{table}[ht]
		\centering
		\begin{tabular}{c|c}
			\toprule
			Initial value for $C$ & Possible values for $D$ when $C \subseteq D$ \\
			\midrule
			$\FI$ & $\FI, \PD, \AD, \FC$ \\
			$\PD$ & $\PD, \FC$ \\
			$\AD$ & $\AD, \FC$ \\
			$\FC$ & $\FC$ \\
			\bottomrule
		\end{tabular}
		\caption{Strategic value migration under coalition inclusion.}
		\label{tab:coalition-migration}
	\end{table}
	
	This bilattice perspective gives an algebraic explanation for convexity: each geometric region is a fiber of the strategic valuation map.
	
	\begin{corollary}[Fiber Convexity]
		\label{cor:fiber-convexity}
		For every playable coalition model $\Mod$, state $w$, coalition $C \subseteq N$, and strategic value $s \in \{\FC,\PD,\AD,\FI\}$, the fiber
		\[
		(\nu_C^w)^{-1}(s)
		=
		\{X \subseteq W \mid \nu_C^w(X)=s\}
		\]
		is order-convex in $(\powerset(W),\subseteq)$.
	\end{corollary}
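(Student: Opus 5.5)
The plan is to derive Corollary~\ref{cor:fiber-convexity} directly from part~(1) of Theorem~\ref{thm:bimonotonicity}. The key observation is that $\nu_C^w$ is monotone from $(\powerset(W),\subseteq)$ into $(\mathbb{B}_{\mathsf{str}},\leq_t)$. For any monotone map into a poset, each fiber over a single point is order-convex. So it suffices to verify this general fact and specialize it.

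Concretely, I will fix $s \in \{\FC,\PD,\AD,\FI\}$ and take $X \subseteq Z \subseteq Y$ with $\nu_C^w(X)=\nu_C^w(Y)=s$. Applying Theorem~\ref{thm:bimonotonicity}(1) twice gives
\[
s = \nu_C^w(X) \leq_t \nu_C^w(Z) \leq_t \nu_C^w(Y) = s.
\]
Since $\leq_t$ is a partial order on $\{0,1\}^2$ (it is the product of $\leq$ and its reverse), antisymmetry yields $\nu_C^w(Z)=s$. Hence $Z$ lies in the fiber, which is therefore order-convex.

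I would also check this coordinatewise, to avoid relying on the bilattice packaging. For the first coordinate, $a_C^w$ is non-decreasing in $X$ by the upward closure of Lemma~\ref{lem:lattice-polarity}(1). Thus $a(X)\le a(Z)\le a(Y)$ with $a(X)=a(Y)$, which forces $a(Z)=a(X)$. For the second coordinate, $b_C^w$ is non-increasing by the downward closure of $E_w^*(C)$ in Lemma~\ref{lem:lattice-polarity}(2). Thus $b(Y)\le b(Z)\le b(X)$ with $b(X)=b(Y)$, which forces $b(Z)=b(X)$.

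The fibers coincide with the regions $\Rfc^w(C)$, $\Rpd^w(C)$, $\Rad^w(C)$ and $\Rfi^w(C)$ defined earlier, so this also gives convexity of those regions.

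There is no genuine obstacle here. The only point needing care is that the argument uses outcome monotonicity alone (one of the playability conditions), and not $\alpha$-duality or superadditivity. It is also worth remarking that order-convexity holds for each fiber separately, not for unions of fibers. For example, $\FI\cup\FC$ need not be convex, because $\AD \leq_t \FI,\FC \leq_t \PD$.
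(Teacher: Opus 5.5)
Your argument is correct and is essentially the paper's own proof: $\nu_C^w$ is monotone into $(\mathbb{B}_{\mathsf{str}},\leq_t)$ by Theorem~\ref{thm:bimonotonicity}(1), and antisymmetry of $\leq_t$ pins the middle value. Your coordinatewise check simply unpacks that theorem.

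One correction to your closing aside, which is wrong as stated. $\FI=(0,0)$ and $\FC=(1,1)$ are $\leq_t$-incomparable. So if $X\subseteq Z$ both have values in $\{\FI,\FC\}$, monotonicity forces $\nu_C^w(X)=\nu_C^w(Z)$. Hence $\Rfi^w(C)\cup\Rfc^w(C)$ is in fact always order-convex.

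The union that genuinely fails is $\Rad^w(C)\cup\Rpd^w(C)$. Liveness and Safety give $\nu_C^w(\emptyset)=\AD$ and $\nu_C^w(W)=\PD$. In Matching Pennies with $C=\{i\}$, the set $\sem{p}$ lies between them with value $\FI$.
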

	
	\begin{proof}
		Suppose $X \subseteq Y \subseteq Z$ and
		\[
		\nu_C^w(X)=s=\nu_C^w(Z).
		\]
		By Theorem~\ref{thm:bimonotonicity}(1),
		\[
		\nu_C^w(X) \leq_t \nu_C^w(Y) \leq_t \nu_C^w(Z).
		\]
		Therefore
		\[
		s \leq_t \nu_C^w(Y) \leq_t s.
		\]
		By antisymmetry of $\leq_t$, we obtain
		\[
		\nu_C^w(Y)=s.
		\]
		Hence the fiber is order-convex.
	\end{proof}
	
	Since
	\[
	(\nu_C^w)^{-1}(\FC)=\Rfc^w(C),
	\]
	and similarly for $\PD,\AD,\FI$, Corollary~\ref{cor:fiber-convexity} gives an algebraic proof of order-convexity. We re-establish this result via explicit set-theoretic arguments in Theorem~\ref{thm:convexity}.

	\subsection{Strategic Box and the Non-Kripkean Nature of Full Inability}
	\label{subsubsec:strategic-box}
	
	The four-fold spectrum contrasts with classical $\Box/\Diamond$ duality. Define the strategic dual by
	\[
	\Box_C\varphi
	\defeq
	\neg\Eff{C}\neg\varphi.
	\]
	This operator states that $C$ cannot force the failure of $\varphi$. The four categories can be rewritten as:
	\[
	\begin{array}{c|cc}
		& \Box_C\varphi & \neg\Box_C\varphi \\
		\hline
		\Eff{C}\varphi & \PD_C(\varphi) & \FC_C(\varphi) \\[2pt]
		\neg\Eff{C}\varphi & \FI_C(\varphi) & \AD_C(\varphi)
	\end{array}
	\]
	
	Full Inability captures a form of strategic neutralization: $C$ cannot enforce $\varphi$, but neither can it enforce the failure of $\varphi$. This is genuinely non-Kripkean. On serial Kripke frames,
	\[
	\neg\Diamond_C\varphi
	\land
	\neg\Diamond_C\neg\varphi
	\]
	is unsatisfiable. Indeed, seriality guarantees at least one accessible successor, and that successor satisfies either $\varphi$ or $\neg\varphi$. Hence at least one of $\Diamond_C\varphi$ or $\Diamond_C\neg\varphi$ must hold. In Coalition Logic, by contrast, $\FI_C(\varphi)$ is satisfiable: in Matching Pennies, an individual player lacks the capacity to force either a match or a mismatch.
	
	\subsection{Strategy-Cell Boundary Interpretation}
	\label{subsec:strategy-cell-boundary}
	
	The four-fold spectrum admits a geometric interpretation in strategic game forms. Assume $E_w$ is induced by a game form at state $w$. Let
	\[
	Act_C
	\defeq
	\prod_{i \in C} Act_i
	\]
	denote the joint actions available to $C$. For each strategy $s_C \in Act_C$, define its outcome cell by
	\[
	O_w(s_C)
	\defeq
	\{o_w(s_C,s_{\coalc}) \mid s_{\coalc} \in Act_{\coalc}\}.
	\]
	Let
	\[
	\mathcal{O}_C^w
	\defeq
	\{O_w(s_C) \mid s_C \in Act_C\}.
	\]
	Under game-form semantics,
	\[
	X \in E_w(C)
	\quad\Longleftrightarrow\quad
	\exists O \in \mathcal{O}_C^w:\; O \subseteq X.
	\]
	
	\begin{figure}[H]
		\centering
		\begin{tikzpicture}[
			scale=0.82,
			every node/.style={font=\small},
			cell/.style={draw=green!45!black, fill=green!25, rounded corners=5pt, very thick, fill opacity=.55, draw opacity=.95},
			boundary/.style={densely dashed, thick},
			panelnote/.style={align=center, font=\footnotesize, text width=4.6cm}
			]
			
			\begin{scope}[shift={(0,5.2)}]
				\node[font=\bfseries] at (2.5,3.05) {$\mathrm{FC}$: Full Control};
				\fill[blue!9] (0,0) rectangle (2.4,2.7);
				\fill[red!9] (2.4,0) rectangle (5,2.7);
				\draw[thick] (0,0) rectangle (5,2.7);
				\draw[boundary] (2.4,0) -- (2.4,2.7);
				\node at (1.2,2.45) {$X$};
				\node at (3.7,2.45) {$\overline X$};
				\draw[cell] (0.45,0.45) rectangle (1.85,1.25);
				\node at (1.15,0.85) {$O_1$};
				\draw[cell] (3.15,0.75) rectangle (4.55,1.55);
				\node at (3.85,1.15) {$O_2$};
				\draw[cell] (1.55,1.65) rectangle (3.55,2.25);
				\node at (2.55,1.95) {$O_3$};
				\node[panelnote] at (2.5,-0.95) {some cell inside $X$;\\ some cell inside $\overline X$};
			\end{scope}
			
			\begin{scope}[shift={(7.0,5.2)}]
				\node[font=\bfseries] at (2.5,3.05) {$\mathrm{PD}$: Positive Determination};
				\fill[blue!9] (0,0) rectangle (2.4,2.7);
				\fill[red!9] (2.4,0) rectangle (5,2.7);
				\draw[thick] (0,0) rectangle (5,2.7);
				\draw[boundary] (2.4,0) -- (2.4,2.7);
				\node at (1.2,2.45) {$X$};
				\node at (3.7,2.45) {$\overline X$};
				\draw[cell] (0.45,0.45) rectangle (1.85,1.25);
				\node at (1.15,0.85) {$O_1$};
				\draw[cell] (1.55,1.25) rectangle (3.75,1.95);
				\node at (2.65,1.60) {$O_2$};
				\draw[cell] (1.95,0.25) rectangle (4.55,0.85);
				\node at (3.25,0.55) {$O_3$};
				\node[panelnote] at (2.5,-0.95) {some cell inside $X$;\\ no cell inside $\overline X$};
			\end{scope}
			
			\begin{scope}[shift={(0,0)}]
				\node[font=\bfseries] at (2.5,3.05) {$\mathrm{AD}$: Adverse Determination};
				\fill[blue!9] (0,0) rectangle (2.4,2.7);
				\fill[red!9] (2.4,0) rectangle (5,2.7);
				\draw[thick] (0,0) rectangle (5,2.7);
				\draw[boundary] (2.4,0) -- (2.4,2.7);
				\node at (1.2,2.45) {$X$};
				\node at (3.7,2.45) {$\overline X$};
				\draw[cell] (3.15,0.45) rectangle (4.55,1.25);
				\node at (3.85,0.85) {$O_1$};
				\draw[cell] (1.25,1.25) rectangle (3.45,1.95);
				\node at (2.35,1.60) {$O_2$};
				\draw[cell] (0.45,0.25) rectangle (2.75,0.85);
				\node at (1.60,0.55) {$O_3$};
				\node[panelnote] at (2.5,-0.95) {some cell inside $\overline X$;\\ no cell inside $X$};
			\end{scope}
			
			\begin{scope}[shift={(7.0,0)}]
				\node[font=\bfseries] at (2.5,3.05) {$\mathrm{FI}$: Full Inability};
				\fill[blue!9] (0,0) rectangle (2.4,2.7);
				\fill[red!9] (2.4,0) rectangle (5,2.7);
				\draw[thick] (0,0) rectangle (5,2.7);
				\draw[boundary] (2.4,0) -- (2.4,2.7);
				\node at (1.2,2.45) {$X$};
				\node at (3.7,2.45) {$\overline X$};
				\draw[cell] (1.25,0.35) rectangle (3.55,0.95);
				\node at (2.40,0.65) {$O_1$};
				\draw[cell] (1.65,1.15) rectangle (4.35,1.75);
				\node at (3.00,1.45) {$O_2$};
				\draw[cell] (0.75,1.85) rectangle (3.10,2.35);
				\node at (1.92,2.10) {$O_3$};
				\node[panelnote] at (2.5,-0.95) {every cell crosses\\ the boundary};
			\end{scope}
			
			\node[
			draw=gray!60,
			rounded corners=3pt,
			fill=gray!5,
			align=center,
			text width=11.6cm,
			font=\footnotesize
			] at (6.0,-2.65)
			{Each green cell $O_k$ represents the outcome cell $O_w(s_C^k)$ for some strategy $s_C^k$ of coalition $C$.};
			
		\end{tikzpicture}
		\caption{Strategy-cell geometric interpretation. Full Inability manifests as a universal boundary-crossing condition: every $C$-strategy cell intersects both $X$ and $\overline X$.}
		\label{fig:strategy-cell-categories}
	\end{figure}
	
	\begin{proposition}[Strategy-Cell Characterization]
		\label{prop:strategy-cell-characterization}
		Assume $E_w$ is induced by a strategic game form. For every $C \subseteq N$ and $X \subseteq W$:
		\begin{enumerate}[label=(\arabic*)]
			\item $X \in \Rfc^w(C)$ iff there exist $O_1,O_2 \in \mathcal{O}_C^w$ such that
			\[
			O_1 \subseteq X
			\quad\text{and}\quad
			O_2 \subseteq \overline{X}.
			\]
			
			\item $X \in \Rpd^w(C)$ iff there exists $O \in \mathcal{O}_C^w$ such that
			\[
			O \subseteq X,
			\]
			and for every $O \in \mathcal{O}_C^w$,
			\[
			O \cap X \neq \emptyset.
			\]
			
			\item $X \in \Rad^w(C)$ iff there exists $O \in \mathcal{O}_C^w$ such that
			\[
			O \subseteq \overline{X},
			\]
			and for every $O \in \mathcal{O}_C^w$,
			\[
			O \cap \overline{X} \neq \emptyset.
			\]
			
			\item $X \in \Rfi^w(C)$ iff for every $O \in \mathcal{O}_C^w$,
			\[
			O \cap X \neq \emptyset
			\quad\text{and}\quad
			O \cap \overline{X} \neq \emptyset.
			\]
		\end{enumerate}
	\end{proposition}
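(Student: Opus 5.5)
The plan is to reduce everything to two elementary equivalences and then read off the four clauses from the definitions of the regions $\Rfc^w(C)$, $\Rpd^w(C)$, $\Rad^w(C)$, $\Rfi^w(C)$ as Boolean combinations of membership in $E_w(C)$ and $E_w^*(C)$.

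First, I would restate the game-form characterization given just before the proposition. For $Y \subseteq W$, $Y \in E_w(C)$ holds iff some $O \in \mathcal{O}_C^w$ satisfies $O \subseteq Y$. This is immediate from the $\exists s_C\,\forall s_{\coalc}$ clause in Section~\ref{subsec:playability}: $o(s_C,s_{\coalc}) \in Y$ for all $s_{\coalc}$ says exactly that $O_w(s_C) \subseteq Y$. Applying this with $Y = X$ and $Y = \overline{X}$ gives the two positive conditions:
\[
a_C^w(X)=1 \iff \exists O \in \mathcal{O}_C^w\colon O \subseteq X,
\qquad
X \in E_w^*(C) \iff \exists O \in \mathcal{O}_C^w\colon O \subseteq \overline{X}.
\]

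Second, I would negate these, which is the only step needing a little care. For any set $O \subseteq W$, $O \not\subseteq \overline{X}$ iff $O \cap X \neq \emptyset$, and $O \not\subseteq X$ iff $O \cap \overline{X} \neq \emptyset$. This is pure set algebra, so no nonemptiness assumption on cells is needed. Hence
\[
X \notin E_w^*(C) \iff \forall O \in \mathcal{O}_C^w\colon O \cap X \neq \emptyset,
\qquad
X \notin E_w(C) \iff \forall O \in \mathcal{O}_C^w\colon O \cap \overline{X} \neq \emptyset.
\]

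Finally, I would plug these into the four region definitions:
\begin{enumerate}[label=(\arabic*)]
\item $\Rfc^w(C) = E_w(C) \cap E_w^*(C)$ gives clause (1).
\item $\Rpd^w(C) = E_w(C) \setminus E_w^*(C)$ gives clause (2).
\item $\Rad^w(C) = E_w^*(C) \setminus E_w(C)$ gives clause (3).
\item $\Rfi^w(C)$ is the complement of $E_w(C) \cup E_w^*(C)$, so it is the conjunction of the two negated conditions, which is clause (4).
\end{enumerate}

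I do not expect a genuine obstacle. The one point to watch is that the universal conditions in clauses (2)–(4) are the \emph{negated} existential conditions, rewritten via $O \not\subseteq \overline{X} \Leftrightarrow O \cap X \neq \emptyset$, and not some stronger statement. Getting the roles of $X$ and $\overline{X}$ right in clauses (2) and (3) is the only place an error could slip in.
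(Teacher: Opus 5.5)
Your proposal is correct and follows the paper's proof essentially step for step. Both arguments derive $X\in E_w(C)\iff\exists O\in\mathcal{O}_C^w\,(O\subseteq X)$ from the game-form clause, negate it using $O\not\subseteq X\iff O\cap\overline{X}\neq\emptyset$, and then read off the four clauses from the Boolean definitions of the regions. Your remark that this negation step is pure set algebra is also right: the paper appeals to nonemptiness of outcome cells at that point, but the equivalences do not need it.
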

	
	\begin{proof}
		By game-form semantics,
		\[
		X \in E_w(C)
		\quad\Longleftrightarrow\quad
		\exists O \in \mathcal{O}_C^w \text{ such that } O \subseteq X.
		\]
		Since every outcome cell $O \in \mathcal{O}_C^w$ is non-empty, we have
		\[
		X \notin E_w(C)
		\quad\Longleftrightarrow\quad
		\forall O \in \mathcal{O}_C^w,\; O \nsubseteq X
		\quad\Longleftrightarrow\quad
		\forall O \in \mathcal{O}_C^w,\; O \cap \overline{X} \neq \emptyset.
		\]
		Similarly,
		\[
		\overline{X} \in E_w(C)
		\quad\Longleftrightarrow\quad
		\exists O \in \mathcal{O}_C^w \text{ such that } O \subseteq \overline{X},
		\]
		and
		\[
		\overline{X} \notin E_w(C)
		\quad\Longleftrightarrow\quad
		\forall O \in \mathcal{O}_C^w,\; O \cap X \neq \emptyset.
		\]
		
		For (1), $X \in \Rfc^w(C) = E_w(C) \cap E_w^*(C)$ means $X \in E_w(C)$ and $\overline{X} \in E_w(C)$. By the above equivalences, this holds iff there exist $O_1, O_2 \in \mathcal{O}_C^w$ with $O_1 \subseteq X$ and $O_2 \subseteq \overline{X}$.
		
		For (2), $X \in \Rpd^w(C) = E_w(C) \setminus E_w^*(C)$ means $X \in E_w(C)$ and $\overline{X} \notin E_w(C)$. This holds iff there exists $O \in \mathcal{O}_C^w$ with $O \subseteq X$, and for every $O \in \mathcal{O}_C^w$, $O \cap X \neq \emptyset$.
		
		For (3), by symmetry with (2), replacing $X$ by $\overline{X}$.
		
		For (4), $X \in \Rfi^w(C) = \powerset(W) \setminus (E_w(C) \cup E_w^*(C))$ means $X \notin E_w(C)$ and $\overline{X} \notin E_w(C)$. By the above equivalences, this holds iff for every $O \in \mathcal{O}_C^w$, $O \cap \overline{X} \neq \emptyset$ and $O \cap X \neq \emptyset$.
	\end{proof}
	
	The final clause characterizes Full Inability as a \emph{universal boundary-crossing condition}: no strategy of $C$ is fully contained in $X$ or in $\overline{X}$.

	\subsection{The Convexity Theorem}
	\label{subsec:convexity}
	
	A subset $S \subseteq \powerset(W)$ is \emph{order-convex} if whenever $X,Z \in S$ and
	\[
	X \subseteq Y \subseteq Z,
	\]
	we have $Y \in S$.
	
	\begin{theorem}[Convexity of Power Regions]
		\label{thm:convexity}
		In any playable coalition model, for every state $w$ and coalition $C \subseteq N$, the regions
		\[
		\Rfc^{w}(C),
		\quad
		\Rpd^{w}(C),
		\quad
		\Rad^{w}(C),
		\quad
		\Rfi^{w}(C)
		\]
		are order-convex in $(\powerset(W),\subseteq)$.
	\end{theorem}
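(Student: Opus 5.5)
The proof will be a direct set-theoretic argument that uses only Lemma~\ref{lem:lattice-polarity}. By that lemma, $E_w(C)$ is an up-set and $E_w^*(C)$ is a down-set in $(\powerset(W),\subseteq)$. The complement of an up-set is a down-set, and vice versa. Each of the four regions can therefore be written as an intersection of one up-set with one down-set:
\begin{align*}
\Rfc^{w}(C) &= E_w(C) \cap E_w^*(C),\\
\Rpd^{w}(C) &= E_w(C) \cap \bigl(\powerset(W)\setminus E_w^*(C)\bigr),\\
\Rad^{w}(C) &= \bigl(\powerset(W)\setminus E_w(C)\bigr) \cap E_w^*(C),\\
\Rfi^{w}(C) &= \bigl(\powerset(W)\setminus E_w(C)\bigr) \cap \bigl(\powerset(W)\setminus E_w^*(C)\bigr).
\end{align*}
In each line one factor is upward closed and the other is downward closed. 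The only exception is $\Rfi^{w}(C)$, whose two factors are a down-set and an up-set in the reverse roles; that is still one of each.

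Next I will record an elementary fact. Every up-set $U$ is order-convex: if $X\in U$ and $X\subseteq Y$, then $Y\in U$, so the upper endpoint is not even needed. Dually, every down-set $D$ is order-convex, using $Z\in D$ and $Y\subseteq Z$. An intersection of order-convex families is order-convex, because if $X,Z$ lie in both families and $X\subseteq Y\subseteq Z$, then $Y$ lies in each family. Combining this with the four representations above gives convexity of all four regions at once.

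For transparency I will also spell out the $\Rfi^{w}(C)$ case explicitly, since it is the one the paper emphasizes. Suppose $X,Z\in\Rfi^{w}(C)$ and $X\subseteq Y\subseteq Z$. Assume $Y\in E_w(C)$. Then $Z\in E_w(C)$ by outcome monotonicity, which is a contradiction. Now assume $\overline{Y}\in E_w(C)$. Since $\overline{Y}\subseteq\overline{X}$, we get $\overline{X}\in E_w(C)$, again a contradiction. Hence $Y\in\Rfi^{w}(C)$. The other three cases go the same way: the $E_w(C)$-membership or non-membership conditions are checked at the appropriate endpoint, and the $E_w^*(C)$ conditions at the opposite endpoint.

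There is no real obstacle. The main care point is bookkeeping: for each region I must use the correct endpoint, $X$ or $Z$, for each of its two defining conditions. The argument uses only outcome monotonicity, so it actually holds for any monotone effectivity function and does not need the full strength of playability. As a cross-check, the result agrees with Corollary~\ref{cor:fiber-convexity}.
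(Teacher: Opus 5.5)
Your proof is correct and is essentially the paper's argument. The paper checks each region directly from the upward closure of $E_w(C)$ and the downward closure of $E_w^*(C)$ (Lemma~\ref{lem:lattice-polarity}), which is your up-set/down-set reasoning unpacked case by case.

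One side remark is wrong, though harmlessly so. Since $\powerset(W)\setminus E_w^*(C)$ is an up-set and $\powerset(W)\setminus E_w(C)$ is a down-set, $\Rpd^{w}(C)$ is an intersection of two up-sets, so it is itself an up-set and both its conditions are checked at the lower endpoint $X$. Likewise $\Rad^{w}(C)$ is an intersection of two down-sets, with both conditions checked at $Z$. So these two regions are not ``one of each'' with conditions at opposite endpoints, but your general fact that any intersection of up-sets and down-sets is order-convex covers them anyway.
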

	
	\begin{proof}
		Fix $w$ and $C$, and let
		\[
		E=E_w(C)
		\quad\text{and}\quad
		I=E_w^*(C).
		\]
		By Lemma~\ref{lem:lattice-polarity}, $E$ is upward closed and $I$ is downward closed.
		
		For $\Rfc^w(C)=E\cap I$, suppose $X,Z \in E\cap I$ and
		\[
		X \subseteq Y \subseteq Z.
		\]
		Since $X \in E$ and $X \subseteq Y$, upward closure gives $Y \in E$. Since $Z \in I$ and $Y \subseteq Z$, downward closure gives $Y \in I$. Hence $Y \in E\cap I$.
		
		For $\Rpd^w(C)=E\setminus I$, suppose $X,Z \in E\setminus I$ and
		\[
		X \subseteq Y \subseteq Z.
		\]
		Since $X \in E$ and $X \subseteq Y$, upward closure gives $Y \in E$. If $Y \in I$, then $X \in I$ by downward closure, contradicting $X \notin I$. Thus $Y \notin I$, and so $Y \in E\setminus I$.
		
		For $\Rad^w(C)=I\setminus E$, suppose $X,Z \in I\setminus E$ and
		\[
		X \subseteq Y \subseteq Z.
		\]
		Since $Z \in I$ and $Y \subseteq Z$, downward closure gives $Y \in I$. If $Y \in E$, then $Z \in E$ by upward closure, contradicting $Z \notin E$. Thus $Y \notin E$, and so $Y \in I\setminus E$.
		
		For
		\[
		\Rfi^w(C)=\powerset(W)\setminus(E\cup I),
		\]
		suppose $X,Z \notin E\cup I$ and
		\[
		X \subseteq Y \subseteq Z.
		\]
		If $Y \in E$, then $Z \in E$ by upward closure, contradicting $Z \notin E$. If $Y \in I$, then $X \in I$ by downward closure, contradicting $X \notin I$. Therefore $Y \notin E\cup I$, so $Y \in \Rfi^w(C)$.
	\end{proof}
	
	\begin{remark}
		The proof uses only outcome monotonicity, namely the upward closure of $E_w(C)$. The convexity result holds for any effectivity model whose effectivity regions are outcome-monotone.
	\end{remark}
	
	\subsection{Verification Significance: Contiguity and Safety Guardrails}
	\label{subsec:significance}
	
	The convexity of $\Rfi^{w}(C)$ underwrites \emph{Strategic Contiguity}:
	\[
	X,Z \in \Rfi^{w}(C)
	\text{ and }
	X \subseteq Y \subseteq Z
	\quad\Longrightarrow\quad
	Y \in \Rfi^{w}(C).
	\]
	Thus Full Inability is stable across refinement intervals.
	
	In verification terms, suppose $X$ is a stronger or more fine-grained specification and $Z$ is a weaker or more abstract specification, with $X \subseteq Z$. If a coalition has Full Inability with respect to both $X$ and $Z$, then it has Full Inability with respect to every intermediate specification $Y$ satisfying
	\[
	X \subseteq Y \subseteq Z.
	\]
	Hence if both a fine-grained safety condition and a coarser abstraction lie beyond a coalition's deterministic control, every intermediate specification inherits the same Full Inability guarantee.
	
	This provides a lattice-theoretic safety guardrail: once the endpoints of a specification interval are certified as uncontrollable by a coalition in both directions, no intermediate weakening or strengthening within that interval can accidentally restore deterministic control.
	
	\subsection{Coalitional Dynamics and the Migration of Power}
	\label{subsec:monotonicity-grand}
	
	The transition across power regions is constrained by coalitional growth. If $C \subseteq D$, then coalition monotonicity gives
	\[
	E_w(C) \subseteq E_w(D).
	\]
	Thus larger coalitions can inherit abilities of smaller coalitions, while smaller coalitions inherit inability from larger coalitions.
	
	\begin{proposition}[Coalitional Shift]
		\label{prop:coalitional-shift}
		Let $\Mod$ be a playable coalition model, $w \in W$, $\varphi \in \Lang_{\CL}$, and $C \subseteq D \subseteq N$. Then:
		\begin{enumerate}[label=(\arabic*)]
			\item Full Inability is anti-monotone:
			\[
			\Mod,w \models \FI_D(\varphi)
			\quad\Longrightarrow\quad
			\Mod,w \models \FI_C(\varphi).
			\]
			
			\item Full Control is monotone:
			\[
			\Mod,w \models \FC_C(\varphi)
			\quad\Longrightarrow\quad
			\Mod,w \models \FC_D(\varphi).
			\]
		\end{enumerate}
	\end{proposition}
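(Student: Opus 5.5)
The plan is to reduce both parts directly to Coalition Monotonicity (Lemma~\ref{lem:coalition-monotonicity}), which under playability gives $E_w(C) \subseteq E_w(D)$ whenever $C \subseteq D$. Both $\FI$ and $\FC$ are Boolean combinations of the two membership facts $\sem{\varphi}_{\Mod} \in E_w(\cdot)$ and $\sem{\neg\varphi}_{\Mod} \in E_w(\cdot)$. So it suffices to transfer each membership fact across the inclusion $E_w(C)\subseteq E_w(D)$, in the appropriate direction.

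For part~(2), assume $\Mod,w \models \FC_C(\varphi)$. By the semantic clause for $\Eff{C}$, this means $\sem{\varphi}_{\Mod} \in E_w(C)$ and $\sem{\neg\varphi}_{\Mod} \in E_w(C)$. Coalition Monotonicity moves both sets into $E_w(D)$, which yields $\Mod,w \models \Eff{D}\varphi \wedge \Eff{D}\neg\varphi$, that is, $\FC_D(\varphi)$.

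For part~(1), I will argue by contraposition on each conjunct rather than directly. Assume $\Mod,w \models \FI_D(\varphi)$, so $\sem{\varphi}_{\Mod} \notin E_w(D)$ and $\sem{\neg\varphi}_{\Mod} \notin E_w(D)$. If $\sem{\varphi}_{\Mod}$ were in $E_w(C)$, Coalition Monotonicity would place it in $E_w(D)$, a contradiction. The same reasoning applies to $\sem{\neg\varphi}_{\Mod}$. Hence $\Mod,w \models \neg\Eff{C}\varphi \wedge \neg\Eff{C}\neg\varphi$, which is $\FI_C(\varphi)$.

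Alternatively, both parts follow at once from Theorem~\ref{thm:bimonotonicity}(2) with $X=\sem{\varphi}_{\Mod}$. Since $\sem{\neg\varphi}_{\Mod}=\overline{X}$, the strategic value $\nu_C^w(X)$ is exactly the category of $(C,\varphi)$. In the determination order $\leq_k$, $\FC=(1,1)$ is the top element and $\FI=(0,0)$ is the bottom. So $\nu_C^w(X)\leq_k\nu_D^w(X)$ forces $\FC$ to be preserved upward and $\FI$ to be inherited downward. I do not expect a real obstacle here. The only point needing care is to rely on playability, specifically on Superadditivity together with Safety, through Lemma~\ref{lem:coalition-monotonicity}, since coalition monotonicity fails for arbitrary coalition models.
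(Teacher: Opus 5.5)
Your proposal is correct and follows the paper's proof essentially verbatim: both parts reduce to Coalition Monotonicity (Lemma~\ref{lem:coalition-monotonicity}). Part~(2) transfers both memberships upward, and part~(1) argues by contradiction on each conjunct. Your alternative via Theorem~\ref{thm:bimonotonicity}(2) is also valid, since $\FC$ and $\FI$ are the top and bottom of $\leq_k$, but it rests on the same lemma.
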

	
	\begin{proof}
		By Coalition Monotonicity (Lemma~\ref{lem:coalition-monotonicity}), $E_w(C) \subseteq E_w(D)$.
		
		For~(1), assume $\Mod,w \models \FI_D(\varphi)$. Then
		\[
		\Mod,w \not\models \Eff{D}\varphi
		\quad\text{and}\quad
		\Mod,w \not\models \Eff{D}\neg\varphi.
		\]
		If $C$ could enforce $\varphi$, then $D$ could enforce $\varphi$ by coalition monotonicity, contradiction. Likewise, if $C$ could enforce $\neg\varphi$, then $D$ could enforce $\neg\varphi$, contradiction. Hence
		\[
		\Mod,w \not\models \Eff{C}\varphi
		\quad\text{and}\quad
		\Mod,w \not\models \Eff{C}\neg\varphi,
		\]
		so $\Mod,w \models \FI_C(\varphi)$.
		
		For~(2), assume $\Mod,w \models \FC_C(\varphi)$. Then
		\[
		\Mod,w \models \Eff{C}\varphi
		\quad\text{and}\quad
		\Mod,w \models \Eff{C}\neg\varphi.
		\]
		By coalition monotonicity, $D$ inherits both enforcement capabilities:
		\[
		\Mod,w \models \Eff{D}\varphi
		\quad\text{and}\quad
		\Mod,w \models \Eff{D}\neg\varphi.
		\]
		Therefore $\Mod,w \models \FC_D(\varphi)$.
	\end{proof}
	
	Expanding coalitions may escape $\Rfi$ toward one-sided or two-sided determination, while subcoalitions inherit Full Inability downward. Conversely, Full Control propagates upward along coalition inclusion.
	
	\subsection{The Inability Threshold and $k$-Robustness}
	\label{subsec:thresholds}
	
	The anti-monotonicity of Full Inability allows us to quantify resilience against collusion. Since larger coalitions may acquire abilities unavailable to their subcoalitions, the relevant boundary is the family of minimal coalitions that escape Full Inability.
	
	\begin{definition}[Inability Threshold]
		\label{def:inability-threshold}
		Let $\Mod$ be a playable coalition model, $w \in W$, and $\varphi \in \Lang_{\CL}$. The inability threshold is
		\[
		\mathcal{C}^{w}_{\min}(\varphi)
		\defeq
		\left\{
		C \subseteq N
		\;\middle|\;
		\Mod,w \not\models \FI_C(\varphi)
		\text{ and }
		\forall D \subset C,\; \Mod,w \models \FI_D(\varphi)
		\right\}.
		\]
	\end{definition}
	
	Thus $\mathcal{C}^{w}_{\min}(\varphi)$ collects the inclusion-minimal coalitions that can determine at least one side of the issue $\varphi$. Escaping Full Inability means satisfying
	\[
	\neg\FI_C(\varphi),
	\]
	equivalently,
	\[
	\Eff{C}\varphi
	\lor
	\Eff{C}\neg\varphi.
	\]
	It does not necessarily mean obtaining Full Control.
	
	\begin{proposition}[Antichain Property]
		\label{prop:threshold-antichain}
		The family $\mathcal{C}^{w}_{\min}(\varphi)$ is an antichain under inclusion.
	\end{proposition}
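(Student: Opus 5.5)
The proof is a direct unfolding of Definition~\ref{def:inability-threshold}. I will argue by contradiction. Suppose $C, C' \in \mathcal{C}^{w}_{\min}(\varphi)$ with $C \subseteq C'$ and $C \neq C'$, so that $C \subset C'$ is a proper inclusion. The goal is to show this configuration is impossible, which is exactly the statement that no two distinct members of the family are comparable under inclusion.

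\emph{Step 1: the minimality clause for $C'$.} Since $C' \in \mathcal{C}^{w}_{\min}(\varphi)$, the second conjunct of the definition says that every proper subcoalition $D \subset C'$ satisfies $\Mod,w \models \FI_D(\varphi)$. Instantiating this with $D = C$ gives $\Mod,w \models \FI_C(\varphi)$.

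\emph{Step 2: the escape clause for $C$.} Since $C \in \mathcal{C}^{w}_{\min}(\varphi)$, the first conjunct says that $\Mod,w \not\models \FI_C(\varphi)$. This directly contradicts Step 1. Hence no two distinct members of $\mathcal{C}^{w}_{\min}(\varphi)$ are comparable, and the family is an antichain.

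There is no genuine obstacle here. The only point worth flagging is that the argument uses neither playability nor the anti-monotonicity of Proposition~\ref{prop:coalitional-shift}: minimality is stated over all proper subsets, not just over subsets obtained by removing a single agent. I would remark that anti-monotonicity becomes relevant only for the stronger fact that every coalition escaping $\FI$ contains some member of $\mathcal{C}^{w}_{\min}(\varphi)$, which follows by finiteness of $N$. That fact is not needed for the antichain claim.
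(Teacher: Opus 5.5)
Your argument is correct and is essentially identical to the paper's proof. The minimality clause for the larger coalition forces $\FI_C(\varphi)$ for the smaller one, which contradicts the escape clause $\Mod,w \not\models \FI_C(\varphi)$ in the smaller coalition's own membership condition.

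One small correction to your closing remark. Finiteness of $N$ alone already gives a member of $\mathcal{C}^{w}_{\min}(\varphi)$ inside every coalition escaping $\FI$: take a subcoalition of minimal size that escapes. Anti-monotonicity is instead what gives the converse, namely that every superset of a member also escapes $\FI$.
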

	
	\begin{proof}
		Suppose $C,D \in \mathcal{C}^{w}_{\min}(\varphi)$ and $C \subset D$. Since $D$ is minimal among coalitions that fail to satisfy Full Inability, every proper subcoalition of $D$ satisfies $\FI(\varphi)$. In particular,
		\[
		\Mod,w \models \FI_C(\varphi).
		\]
		This contradicts $C \in \mathcal{C}^{w}_{\min}(\varphi)$, which requires
		\[
		\Mod,w \not\models \FI_C(\varphi).
		\]
		Hence no two distinct members of $\mathcal{C}^{w}_{\min}(\varphi)$ are comparable by inclusion.
	\end{proof}
	
	\begin{definition}[Robustness Degree]
		\label{def:robustness-degree}
		The robustness degree is
		\[
		\mathrm{Robustness}^{w}(\varphi)
		\defeq
		\begin{cases}
			\min\{|C| : C \in \mathcal{C}^{w}_{\min}(\varphi)\},
			& \text{if } \mathcal{C}^{w}_{\min}(\varphi) \neq \emptyset, \\[2pt]
			\infty,
			& \text{otherwise}.
		\end{cases}
		\]
		A system is \emph{$k$-robust with respect to $\varphi$ at $w$} if
		\[
		\mathrm{Robustness}^{w}(\varphi)>k.
		\]
	\end{definition}
	
	Intuitively, $k$-robustness means that no coalition of size at most $k$ can escape Full Inability with respect to $\varphi$. This provides a quantitative measure of strategic neutralization: the higher the robustness degree, the larger the coalition required to gain any deterministic control over $\varphi$.
	
	\begin{proposition}[Robustness Characterization]
		\label{prop:robustness-characterization}
		Assume $\Mod$ is playable. The system is $k$-robust with respect to $\varphi$ at $w$ iff every coalition $C \subseteq N$ with $|C|\leq k$ satisfies
		\[
		\Mod,w \models \FI_C(\varphi).
		\]
	\end{proposition}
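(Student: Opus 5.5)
The plan is to prove both directions by unfolding Definition~\ref{def:robustness-degree}. Anti-monotonicity of Full Inability (Proposition~\ref{prop:coalitional-shift}(1)) supplies the bridge between ``minimal escaping coalitions'' and ``all escaping coalitions''. Throughout, $N$ is finite, so every coalition has finitely many subcoalitions and inclusion-minimal elements exist.

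For the right-to-left direction, I will assume that every $C$ with $|C|\leq k$ satisfies $\FI_C(\varphi)$ at $w$. If $\mathcal{C}^{w}_{\min}(\varphi)=\emptyset$, then the robustness degree is $\infty>k$ and we are done. Otherwise, take any $C\in\mathcal{C}^{w}_{\min}(\varphi)$. By Definition~\ref{def:inability-threshold}, $\Mod,w\not\models\FI_C(\varphi)$, so the hypothesis forces $|C|>k$. Since this holds for every member, the minimum of the sizes exceeds $k$, and the system is $k$-robust.

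For the left-to-right direction, I will argue by contraposition. Suppose some $C$ with $|C|\leq k$ has $\Mod,w\not\models\FI_C(\varphi)$. Among the subcoalitions $D\subseteq C$ with $\Mod,w\not\models\FI_D(\varphi)$, a set which is nonempty because it contains $C$ and finite because $C$ is finite, I choose an inclusion-minimal one, $D_0$. The key step is to verify that $D_0\in\mathcal{C}^{w}_{\min}(\varphi)$. Every proper subset $D'\subset D_0$ is also a subset of $C$, so by the minimality of $D_0$ it satisfies $\FI_{D'}(\varphi)$, which is exactly the requirement in Definition~\ref{def:inability-threshold}. Hence $\mathcal{C}^{w}_{\min}(\varphi)\neq\emptyset$, and $\mathrm{Robustness}^{w}(\varphi)\leq|D_0|\leq|C|\leq k$, contradicting $k$-robustness.

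The main point needing care is that minimality in Definition~\ref{def:inability-threshold} quantifies over all proper subsets of $D_0$, not only those inside the chosen family. Here this is automatic, since every subset of $D_0$ lies inside $C$. Proposition~\ref{prop:coalitional-shift}(1), and hence playability, is therefore not strictly needed for this argument. It does, however, explain why the threshold family captures all escaping coalitions: by anti-monotonicity, every escaping coalition contains a threshold member, and that member is no larger than the coalition itself.
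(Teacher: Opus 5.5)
Your proposal is correct and follows essentially the paper's argument. In both, finiteness of $N$ yields an inclusion-minimal escaping subcoalition inside any escaping coalition; that subcoalition lies in $\mathcal{C}^{w}_{\min}(\varphi)$, so the robustness degree is the least size of a coalition failing $\FI$. Your observation that Proposition~\ref{prop:coalitional-shift}(1), and hence playability, is not actually needed is also accurate: the paper cites anti-monotonicity, but its decisive step is the same finiteness argument that you make fully explicit.
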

	
	\begin{proof}
		By Proposition~\ref{prop:coalitional-shift}, Full Inability is anti-monotone with respect to coalition inclusion. Equivalently, failure of Full Inability is monotone upward: if
		\[
		\Mod,w \not\models \FI_C(\varphi)
		\quad\text{and}\quad
		C \subseteq D,
		\]
		then
		\[
		\Mod,w \not\models \FI_D(\varphi).
		\]
		Indeed, if $C$ can enforce $\varphi$ or $\neg\varphi$, then every larger coalition $D$ can enforce the same side by coalition monotonicity.
		
		Since $N$ is finite, every coalition failing Full Inability contains an inclusion-minimal subcoalition failing Full Inability. Therefore $\mathrm{Robustness}^{w}(\varphi)$ is precisely the least size of a coalition that escapes Full Inability.
		
		Hence
		\[
		\mathrm{Robustness}^{w}(\varphi)>k
		\]
		holds iff no coalition of size at most $k$ fails Full Inability. Equivalently, every coalition $C \subseteq N$ with $|C|\leq k$ satisfies
		\[
		\Mod,w \models \FI_C(\varphi).
		\]
	\end{proof}
	
	Thus $k$-robustness gives a verifiable lower bound on the coalition size required to escape Full Inability. Importantly, escaping Full Inability need not mean obtaining Full Control: the minimal coalition may achieve only Positive Determination or Adverse Determination.

	\section{The Logic \texorpdfstring{\(\CLFI\)}{CL\^{}FI}: Axiomatization and Meta-Theory}
	\label{sec:axiomatization}
	
	This section formalizes the logic of Full Inability by providing a proof system with an explicit \(\FI\)-modality. The resulting system, denoted \(\CLFI\), is a definitional extension of standard Coalition Logic. The central thesis is that \(\FI_C(\varphi)\) does not increase the expressive power of the base language beyond the ordinary effectivity modality \(\Eff{C}\). Rather, it packages a strategically significant negative configuration into a single, proof-theoretically tractable operator.
	
	\subsection{Language and Proof System}
	\label{subsec:clfi-lang}
	
	The extended language \(\Lang_{\CLFI}\) builds upon the standard language \(\Lang_{\CL}\) of Coalition Logic by treating the Full Inability operator as a primitive modality.
	
	\begin{definition}[Language \(\Lang_{\CLFI}\)]
		\label{def:lang-clfi}
		Formulas of \(\Lang_{\CLFI}\) are recursively defined by the grammar
		\[
		\varphi ::= p \mid \neg\varphi \mid (\varphi \wedge \psi) \mid \Eff{C}\varphi \mid \FI_C(\varphi),
		\]
		where \(p \in \Prop\) and \(C \subseteq N\).
		
		The Boolean connectives \(\vee\), \(\to\), and \(\leftrightarrow\) are defined as standard abbreviations.
	\end{definition}
	
	The semantics of \(\FI_C(\varphi)\) is determined by the simultaneous absence of both positive and negative enforceability.
	
	\begin{definition}[Semantics of Full Inability]
		\label{def:semantics-fi}
		Let \(\Mod = (W, E, V)\) be a playable coalition model, let \(w \in W\), and let \(C \subseteq N\). Then
		\[
		\Mod, w \models \FI_C(\varphi)
		\]
		if and only if
		\[
		\sem{\varphi}_{\Mod} \notin E_w(C)
		\quad\text{and}\quad
		\sem{\neg\varphi}_{\Mod} \notin E_w(C).
		\]
		Equivalently,
		\[
		\Mod, w \models \FI_C(\varphi)
		\quad\Longleftrightarrow\quad
		\Mod, w \models \neg\Eff{C}\varphi \wedge \neg\Eff{C}\neg\varphi.
		\]
	\end{definition}
	
	The proof system \(\CLFI\) inherits all axiom schemes and inference rules of standard Coalition Logic---namely, the propositional tautologies (\textbf{PL}), the bottom axiom (\textbf{\(\bot\)}), outcome monotonicity (\textbf{M}), superadditivity (\textbf{S}), \(N\)-maximality (\textbf{N}), and the rule of replacement of equivalents (\textbf{RE})~\cite{Pauly02}. These schemes and rules are understood over the extended language \(\Lang_{\CLFI}\). In particular, \textbf{RE} permits replacement of provably equivalent \(\Lang_{\CLFI}\)-formulas under \(\Eff{C}\), and outcome monotonicity \textbf{M} applies to implications between \(\Lang_{\CLFI}\)-formulas.
	
	The system is augmented with the following definitional axiom for Full Inability:
	\begin{equation}
		\tag{\(\mathrm{Def}\text{-}\FI\)}
		\FI_C(\varphi)
		\leftrightarrow
		\bigl(
		\neg\Eff{C}\varphi
		\wedge
		\neg\Eff{C}\neg\varphi
		\bigr).
	\end{equation}
	
	Consequently, \(\FI_C(\varphi)\) is proof-theoretically eliminable in favor of the ordinary coalition modality \(\Eff{C}\). The extension is therefore definitional: while the new operator introduces no additional semantic primitive, it allows the logic to directly name and manipulate a strategically significant configuration.
	
	\subsection{Derived Structural Principles}
	\label{subsec:clfi-axioms}
	
	The system \(\CLFI\) derives characteristic structural principles governing Full Inability. These derivations formally establish that \(\FI\) behaves as a stable negative modality generated by bilateral failures of enforceability.
	
	\begin{proposition}[Derived Principles]
		\label{prop:derived-principles}
		The following principles are derivable in \(\CLFI\).
		\begin{enumerate}[label=(\roman*)]
			\item \textbf{Negation Invariance}:
			\[
			\vdash_{\CLFI} \FI_C(\varphi) \leftrightarrow \FI_C(\neg\varphi).
			\]
			
			\item \textbf{Anti-Monotonicity in Coalitions}: For all \(C \subseteq D\),
			\[
			\vdash_{\CLFI} \FI_D(\varphi) \to \FI_C(\varphi).
			\]
			
			\item \textbf{Grand Coalition Determination}: If
			\[
			\vdash_{\CL} \Eff{N}\varphi \vee \Eff{N}\neg\varphi,
			\]
			then
			\[
			\vdash_{\CLFI} \neg\FI_N(\varphi).
			\]
			
			\item \textbf{Convexity Principle}: If
			\[
			\vdash_{\CLFI} \varphi \to \chi
			\quad\text{and}\quad
			\vdash_{\CLFI} \chi \to \psi,
			\]
			then
			\[
			\vdash_{\CLFI}
			\bigl(
			\FI_C(\varphi) \wedge \FI_C(\psi)
			\bigr)
			\to
			\FI_C(\chi).
			\]
		\end{enumerate}
	\end{proposition}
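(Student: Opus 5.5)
All four items are derived the same way: unfold every occurrence of $\FI$ with (Def-$\FI$), reason in the $\Eff{C}$-fragment using the inherited Coalition Logic machinery (\textbf{RE}, \textbf{M}, \textbf{PL}), and fold back with (Def-$\FI$). No semantic argument is needed.

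For (i), unfolding gives $\FI_C(\neg\varphi) \leftrightarrow \neg\Eff{C}\neg\varphi \wedge \neg\Eff{C}\neg\neg\varphi$. Since $\neg\neg\varphi\leftrightarrow\varphi$ is a tautology, \textbf{RE} yields $\vdash \Eff{C}\neg\neg\varphi \leftrightarrow \Eff{C}\varphi$. Commutativity of $\wedge$ then gives the right-hand side of (Def-$\FI$) for $\varphi$.

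For (ii), I first derive syntactic coalition monotonicity, $\vdash \Eff{C}\psi \to \Eff{D}\psi$ for $C\subseteq D$. This mirrors Lemma~\ref{lem:coalition-monotonicity}:
\begin{itemize}
\item The $\top$ axiom, obtained from \textbf{N} and $\bot$ as in Pauly's system, gives $\Eff{D\setminus C}\top$.
\item Superadditivity \textbf{S} gives $\Eff{C}\psi\wedge\Eff{D\setminus C}\top \to \Eff{D}(\psi\wedge\top)$.
\item \textbf{RE} simplifies $\psi\wedge\top$ to $\psi$.
\end{itemize}
If Pauly's list does not directly supply $\Eff{C}\top$ for all $C$, I would obtain it from \textbf{N} together with $\neg\Eff{\emptyset}\bot$; this is the step I expect to need the most care. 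With monotonicity available, I instantiate it at $\psi=\varphi$ and $\psi=\neg\varphi$ and contrapose both. This gives $\neg\Eff{D}\varphi\wedge\neg\Eff{D}\neg\varphi \to \neg\Eff{C}\varphi\wedge\neg\Eff{C}\neg\varphi$, and (Def-$\FI$) on both sides finishes the derivation.

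For (iii), the hypothesis is a $\CL$-theorem and is therefore a $\CLFI$-theorem, since $\CLFI$ contains all $\CL$ axioms and rules. It is propositionally equivalent to $\neg(\neg\Eff{N}\varphi\wedge\neg\Eff{N}\neg\varphi)$, and (Def-$\FI$) turns this into $\neg\FI_N(\varphi)$.

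For (iv), I mirror the set-theoretic argument of Theorem~\ref{thm:convexity}. From $\vdash\varphi\to\chi$, \textbf{M} gives $\vdash\Eff{C}\varphi\to\Eff{C}\chi$. From $\vdash\chi\to\psi$, \textbf{M} gives $\vdash\Eff{C}\chi\to\Eff{C}\psi$. From the tautology $(\chi\to\psi)\to(\neg\psi\to\neg\chi)$ and \textbf{M}, I get $\vdash\Eff{C}\neg\chi\to\Eff{C}\neg\varphi$ by routing through $\neg\psi\to\neg\chi\to\neg\varphi$.

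Now assume $\FI_C(\varphi)\wedge\FI_C(\psi)$ and unfold it. Then:
\begin{itemize}
\item $\neg\Eff{C}\psi$ together with $\Eff{C}\chi\to\Eff{C}\psi$ yields $\neg\Eff{C}\chi$.
\item $\neg\Eff{C}\neg\varphi$ together with $\Eff{C}\neg\chi\to\Eff{C}\neg\varphi$ yields $\neg\Eff{C}\neg\chi$.
\end{itemize}
Folding with (Def-$\FI$) gives $\FI_C(\chi)$. Note that only the upper bound $\psi$ is needed for the positive side and only the lower bound $\varphi$ for the negative side, exactly as in the semantic convexity proof. Beyond checking how monotonicity is packaged in the axiom list, this part is routine.
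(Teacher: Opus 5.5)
Your derivations are correct and follow the paper's route item by item: unfold with $\mathrm{Def}\text{-}\FI$, reason with \textbf{RE}, \textbf{M} and contraposition, then fold back. The only addition is in (ii), where you spell out the coalition-monotonicity lemma via $\Eff{D\setminus C}\top$, \textbf{S} and \textbf{RE}; the paper simply cites it as derivable in $\CL$.

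One justification in that sketch is wrong: $\Eff{C}\top$ cannot be obtained from \textbf{N} and $(\bot)$. Those two axioms yield only $\Eff{N}\top$ (from $\neg\Eff{\emptyset}\bot$) and, contrapositively from $\neg\Eff{N}\bot$, $\Eff{\emptyset}\top$. They give nothing for intermediate coalitions. To see this, take a one-state model with two agents, $E_w(\{1\})=\emptyset$ and $E_w(\emptyset)=E_w(\{2\})=E_w(N)=\{W\}$. It satisfies $(\bot)$, \textbf{M}, \textbf{S} and \textbf{N}, yet it falsifies $\Eff{\{1\}}\top$ and even $\Eff{\emptyset}\top\to\Eff{\{1\}}\top$. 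So your fallback would fail. Item (ii) genuinely depends on Pauly's primitive axiom $(\top)$ (Safety), which is an axiom in its own right, not something derived from \textbf{N} and $(\bot)$. $\CLFI$ does inherit $(\top)$ as one of ``all axiom schemes of standard Coalition Logic'', even though the paper's enumerated list omits it. Your main line is fine once $(\top)$ is cited as such.

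There is also a smaller slip in (iv). The implication $\vdash\Eff{C}\neg\chi\to\Eff{C}\neg\varphi$ comes from contraposing the first premise $\varphi\to\chi$ and then applying \textbf{M}. It does not come from $(\chi\to\psi)\to(\neg\psi\to\neg\chi)$. With that fixed, your (iv) is exactly the paper's argument.
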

	
	\begin{proof}
		We provide derivations for each principle.
		
		\begin{enumerate}[label=(\roman*)]
			\item \textbf{Negation Invariance.}
			By \(\mathrm{Def}\text{-}\FI\),
			\[
			\FI_C(\varphi)
			\leftrightarrow
			\bigl(
			\neg\Eff{C}\varphi
			\wedge
			\neg\Eff{C}\neg\varphi
			\bigr).
			\]
			Similarly,
			\[
			\FI_C(\neg\varphi)
			\leftrightarrow
			\bigl(
			\neg\Eff{C}\neg\varphi
			\wedge
			\neg\Eff{C}\neg\neg\varphi
			\bigr).
			\]
			By classical propositional logic,
			\[
			\vdash_{\CLFI} \neg\neg\varphi \leftrightarrow \varphi.
			\]
			Applying \textbf{RE} for \(\Eff{C}\), we obtain
			\[
			\vdash_{\CLFI}
			\Eff{C}\neg\neg\varphi
			\leftrightarrow
			\Eff{C}\varphi.
			\]
			Therefore,
			\[
			\vdash_{\CLFI}
			\neg\Eff{C}\neg\neg\varphi
			\leftrightarrow
			\neg\Eff{C}\varphi.
			\]
			Using this equivalence together with commutativity and associativity of conjunction, we derive
			\[
			\vdash_{\CLFI}
			\FI_C(\varphi)
			\leftrightarrow
			\FI_C(\neg\varphi).
			\]
			
			\item \textbf{Anti-Monotonicity in Coalitions.}
			Let \(C \subseteq D\). By the derivable coalition monotonicity principle of Coalition Logic, applied over the extended language,
			\[
			\vdash_{\CLFI} \Eff{C}\varphi \to \Eff{D}\varphi.
			\]
			By propositional contraposition,
			\[
			\vdash_{\CLFI} \neg\Eff{D}\varphi \to \neg\Eff{C}\varphi.
			\]
			Applying the same reasoning to \(\neg\varphi\) yields
			\[
			\vdash_{\CLFI} \Eff{C}\neg\varphi \to \Eff{D}\neg\varphi,
			\]
			and therefore
			\[
			\vdash_{\CLFI} \neg\Eff{D}\neg\varphi \to \neg\Eff{C}\neg\varphi.
			\]
			Conjoining these implications gives
			\[
			\vdash_{\CLFI}
			\bigl(
			\neg\Eff{D}\varphi
			\wedge
			\neg\Eff{D}\neg\varphi
			\bigr)
			\to
			\bigl(
			\neg\Eff{C}\varphi
			\wedge
			\neg\Eff{C}\neg\varphi
			\bigr).
			\]
			Folding both sides via \(\mathrm{Def}\text{-}\FI\), we obtain
			\[
			\vdash_{\CLFI}
			\FI_D(\varphi)
			\to
			\FI_C(\varphi).
			\]
			
			\item \textbf{Grand Coalition Determination.}
			Assume
			\[
			\vdash_{\CL} \Eff{N}\varphi \vee \Eff{N}\neg\varphi.
			\]
			Since every theorem of \(\CL\) is derivable in \(\CLFI\), we have
			\[
			\vdash_{\CLFI} \Eff{N}\varphi \vee \Eff{N}\neg\varphi.
			\]
			By propositional logic,
			\[
			\vdash_{\CLFI}
			\bigl(
			\Eff{N}\varphi
			\vee
			\Eff{N}\neg\varphi
			\bigr)
			\to
			\neg
			\bigl(
			\neg\Eff{N}\varphi
			\wedge
			\neg\Eff{N}\neg\varphi
			\bigr).
			\]
			By modus ponens,
			\[
			\vdash_{\CLFI}
			\neg
			\bigl(
			\neg\Eff{N}\varphi
			\wedge
			\neg\Eff{N}\neg\varphi
			\bigr).
			\]
			By \(\mathrm{Def}\text{-}\FI\),
			\[
			\vdash_{\CLFI}
			\FI_N(\varphi)
			\leftrightarrow
			\bigl(
			\neg\Eff{N}\varphi
			\wedge
			\neg\Eff{N}\neg\varphi
			\bigr).
			\]
			Hence, by propositional reasoning,
			\[
			\vdash_{\CLFI} \neg\FI_N(\varphi).
			\]
			
			\item \textbf{Convexity Principle.}
			Assume
			\[
			\vdash_{\CLFI} \varphi \to \chi
			\quad\text{and}\quad
			\vdash_{\CLFI} \chi \to \psi.
			\]
			
			First, by outcome monotonicity \textbf{M} applied to the second premise,
			\[
			\vdash_{\CLFI} \Eff{C}\chi \to \Eff{C}\psi.
			\]
			By propositional contraposition,
			\[
			\vdash_{\CLFI} \neg\Eff{C}\psi \to \neg\Eff{C}\chi.
			\]
			From \(\mathrm{Def}\text{-}\FI\),
			\[
			\vdash_{\CLFI} \FI_C(\psi) \to \neg\Eff{C}\psi.
			\]
			Therefore,
			\[
			\vdash_{\CLFI} \FI_C(\psi) \to \neg\Eff{C}\chi.
			\]
			
			Second, from \(\vdash_{\CLFI} \varphi \to \chi\), propositional contraposition gives
			\[
			\vdash_{\CLFI} \neg\chi \to \neg\varphi.
			\]
			By outcome monotonicity \textbf{M},
			\[
			\vdash_{\CLFI} \Eff{C}\neg\chi \to \Eff{C}\neg\varphi.
			\]
			By propositional contraposition,
			\[
			\vdash_{\CLFI} \neg\Eff{C}\neg\varphi \to \neg\Eff{C}\neg\chi.
			\]
			From \(\mathrm{Def}\text{-}\FI\),
			\[
			\vdash_{\CLFI} \FI_C(\varphi) \to \neg\Eff{C}\neg\varphi.
			\]
			Therefore,
			\[
			\vdash_{\CLFI} \FI_C(\varphi) \to \neg\Eff{C}\neg\chi.
			\]
			
			Combining the two derived implications yields
			\[
			\vdash_{\CLFI}
			\bigl(
			\FI_C(\varphi)
			\wedge
			\FI_C(\psi)
			\bigr)
			\to
			\bigl(
			\neg\Eff{C}\chi
			\wedge
			\neg\Eff{C}\neg\chi
			\bigr).
			\]
			Folding the consequent via \(\mathrm{Def}\text{-}\FI\), we conclude
			\[
			\vdash_{\CLFI}
			\bigl(
			\FI_C(\varphi)
			\wedge
			\FI_C(\psi)
			\bigr)
			\to
			\FI_C(\chi).
			\]
		\end{enumerate}
	\end{proof}
	
	The Convexity Principle is the proof-theoretic counterpart of the semantic order-convexity established in Theorem~\ref{thm:convexity}. If \(\chi\) is logically sandwiched between \(\varphi\) and \(\psi\), then Full Inability at the two extremes guarantees Full Inability for the intermediate specification.

	\subsection{Elimination Translation}
	\label{subsec:clfi-translation}
	
	To establish the meta-logical properties of \(\CLFI\), we define an elimination translation
	\[
	\tr \colon \Lang_{\CLFI} \to \Lang_{\CL}.
	\]
	This mapping removes every occurrence of \(\FI\) by expanding it into its defining \(\Eff{C}\)-formula.
	
	\begin{definition}[Elimination Translation]
		\label{def:elimination-translation}
		The translation \(\tr \colon \Lang_{\CLFI} \to \Lang_{\CL}\) is recursively defined by:
		\begin{align*}
			\tr(p) &= p, \\
			\tr(\neg\varphi) &= \neg \tr(\varphi), \\
			\tr(\varphi \wedge \psi) &= \tr(\varphi) \wedge \tr(\psi), \\
			\tr(\Eff{C}\varphi) &= \Eff{C} \tr(\varphi), \\
			\tr(\FI_C(\varphi)) &=
			\neg\Eff{C} \tr(\varphi)
			\wedge
			\neg\Eff{C} \neg \tr(\varphi).
		\end{align*}
	\end{definition}
	
	\begin{lemma}[Truth Preservation]
		\label{lem:truth-preservation}
		For every playable coalition model \(\Mod\), every state \(w\), and every formula \(\varphi \in \Lang_{\CLFI}\),
		\[
		\Mod, w \models \varphi
		\quad\Longleftrightarrow\quad
		\Mod, w \models \tr(\varphi).
		\]
	\end{lemma}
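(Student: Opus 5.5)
The proof goes by structural induction on \(\varphi \in \Lang_{\CLFI}\), with a slightly strengthened induction hypothesis. Alongside the pointwise equivalence \(\Mod,w\models\varphi \Leftrightarrow \Mod,w\models\tr(\varphi)\) at every state \(w\), I record the consequence that truth sets coincide:
\[
\sem{\varphi}_{\Mod}=\sem{\tr(\varphi)}_{\Mod}.
\]
Because the modal clauses evaluate \(\Eff{C}\) through the truth set of its argument, it is this set-level equality, not the equivalence at a single state, that lets the induction pass through the modalities. The hypothesis must therefore quantify over all states of \(\Mod\) at once.

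\textbf{Base and Boolean cases.} For atoms, \(\tr(p)=p\), so the claim is immediate. For \(\neg\varphi\) and \(\varphi\wedge\psi\), the translation commutes with the connectives. The satisfaction clauses for \(\neg\) and \(\wedge\) are the same in both languages, so these cases follow directly from the induction hypotheses for the immediate subformulas.

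\textbf{Modal cases.} For \(\Eff{C}\varphi\), we have \(\Mod,w\models\Eff{C}\varphi\) iff \(\sem{\varphi}_{\Mod}\in E_w(C)\). By the induction hypothesis, \(\sem{\varphi}_{\Mod}=\sem{\tr(\varphi)}_{\Mod}\), so this holds iff \(\sem{\tr(\varphi)}_{\Mod}\in E_w(C)\), that is, iff \(\Mod,w\models\Eff{C}\tr(\varphi)=\tr(\Eff{C}\varphi)\).

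For \(\FI_C(\varphi)\), Definition~\ref{def:semantics-fi} gives: \(\Mod,w\models\FI_C(\varphi)\) iff \(\sem{\varphi}_{\Mod}\notin E_w(C)\) and \(\sem{\neg\varphi}_{\Mod}\notin E_w(C)\). Since \(\sem{\neg\varphi}_{\Mod}=W\setminus\sem{\varphi}_{\Mod}\), the induction hypothesis yields \(\sem{\neg\varphi}_{\Mod}=W\setminus\sem{\tr(\varphi)}_{\Mod}=\sem{\neg\tr(\varphi)}_{\Mod}\). Substituting both equalities, the condition becomes \(\Mod,w\models\neg\Eff{C}\tr(\varphi)\wedge\neg\Eff{C}\neg\tr(\varphi)\), which is exactly \(\tr(\FI_C(\varphi))\).

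\textbf{Expected difficulty.} No step is technically hard; the one place needing care is the modal cases. There one must invoke the induction hypothesis uniformly over all states to obtain equality of truth sets, since \(E_w(C)\) is not assumed to respect anything weaker than extensional equality of sets. Playability is never used, so the lemma in fact holds for arbitrary coalition models.
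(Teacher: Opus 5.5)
Your proposal is correct and follows essentially the same route as the paper: a structural induction in which the hypothesis is used as the truth-set equality $\sem{\psi}_{\Mod}=\sem{\tr(\psi)}_{\Mod}$ to pass through $\Eff{C}$ and $\FI_C$, with the complement step $\sem{\neg\psi}_{\Mod}=\sem{\neg\tr(\psi)}_{\Mod}$ in the $\FI$ case. Your explicit remarks that the hypothesis must range over all states, and that playability is never used, make precise what the paper leaves implicit.
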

	
	\begin{proof}
		We proceed by structural induction on \(\varphi\). The atomic and Boolean cases are immediate.
		
		For the effectivity modality, suppose \(\varphi = \Eff{C}\psi\). By the induction hypothesis,
		\[
		\sem{\psi}_{\Mod} = \sem{\tr(\psi)}_{\Mod}.
		\]
		Thus,
		\[
		\Mod, w \models \Eff{C}\psi
		\;\Longleftrightarrow\;
		\sem{\psi}_{\Mod} \in E_w(C)
		\;\Longleftrightarrow\;
		\sem{\tr(\psi)}_{\Mod} \in E_w(C)
		\;\Longleftrightarrow\;
		\Mod, w \models \Eff{C}\tr(\psi).
		\]
		Since \(\tr(\Eff{C}\psi) = \Eff{C}\tr(\psi)\), we obtain
		\[
		\Mod, w \models \Eff{C}\psi
		\;\Longleftrightarrow\;
		\Mod, w \models \tr(\Eff{C}\psi).
		\]
		
		For the Full Inability operator, suppose \(\varphi = \FI_C(\psi)\). By Definition~\ref{def:semantics-fi},
		\[
		\Mod, w \models \FI_C(\psi)
		\]
		if and only if
		\[
		\sem{\psi}_{\Mod} \notin E_w(C)
		\quad\text{and}\quad
		\sem{\neg\psi}_{\Mod} \notin E_w(C).
		\]
		By the induction hypothesis,
		\[
		\sem{\psi}_{\Mod} = \sem{\tr(\psi)}_{\Mod}.
		\]
		Since Boolean negation is interpreted as set-theoretic complementation,
		\[
		\sem{\neg\psi}_{\Mod}
		=
		W\setminus\sem{\psi}_{\Mod}
		=
		W\setminus\sem{\tr(\psi)}_{\Mod}
		=
		\sem{\neg\tr(\psi)}_{\Mod}.
		\]
		Substituting these equalities gives
		\[
		\sem{\tr(\psi)}_{\Mod} \notin E_w(C)
		\quad\text{and}\quad
		\sem{\neg \tr(\psi)}_{\Mod} \notin E_w(C),
		\]
		which is equivalent to
		\[
		\Mod, w \models
		\neg\Eff{C}\tr(\psi)
		\wedge
		\neg\Eff{C}\neg \tr(\psi).
		\]
		Since \(\tr(\FI_C(\psi)) = \neg\Eff{C}\tr(\psi) \wedge \neg\Eff{C}\neg \tr(\psi)\), we obtain
		\[
		\Mod, w \models \FI_C(\psi)
		\;\Longleftrightarrow\;
		\Mod, w \models \tr(\FI_C(\psi)).
		\]
		This completes the induction.
	\end{proof}
	
	\begin{lemma}[Provable Equivalence under Translation]
		\label{lem:derivable-elimination}
		For every formula \(\varphi \in \Lang_{\CLFI}\),
		\[
		\vdash_{\CLFI} \varphi \leftrightarrow \tr(\varphi).
		\]
	\end{lemma}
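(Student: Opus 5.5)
The plan is a structural induction on $\varphi \in \Lang_{\CLFI}$. Throughout, I will freely use propositional reasoning (\textbf{PL}), which is available in $\CLFI$ over the extended language, together with the replacement rule \textbf{RE} for $\Eff{C}$ and the definitional axiom $\mathrm{Def}\text{-}\FI$. The claim to be proved at each stage is $\vdash_{\CLFI} \varphi \leftrightarrow \tr(\varphi)$, which makes sense because $\Lang_{\CL} \subseteq \Lang_{\CLFI}$, so $\tr(\varphi)$ is itself a $\CLFI$-formula.

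The base case $\varphi = p$ is the instance $p \leftrightarrow p$ of \textbf{PL}. For the Boolean cases, suppose $\vdash \psi \leftrightarrow \tr(\psi)$ and $\vdash \chi \leftrightarrow \tr(\chi)$. Propositional tautologies then yield
\[
\vdash \neg\psi \leftrightarrow \neg\tr(\psi)
\quad\text{and}\quad
\vdash (\psi\wedge\chi) \leftrightarrow (\tr(\psi)\wedge\tr(\chi)).
\]
These are exactly the required equivalences by the clauses of Definition~\ref{def:elimination-translation}.

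For the modal case $\varphi = \Eff{C}\psi$, the induction hypothesis $\vdash \psi \leftrightarrow \tr(\psi)$ and \textbf{RE} give $\vdash \Eff{C}\psi \leftrightarrow \Eff{C}\tr(\psi)$, which is the required equivalence. Here I rely on the paper's stipulation that \textbf{RE} applies to provably equivalent $\Lang_{\CLFI}$-formulas. If one prefers to use only \textbf{M}, the same equivalence follows by applying \textbf{M} to both directions of the biconditional.

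The case $\varphi = \FI_C(\psi)$ is the only one needing care, and I would chain two equivalences.
\begin{enumerate}[label=(\alph*)]
\item By $\mathrm{Def}\text{-}\FI$, $\vdash \FI_C(\psi) \leftrightarrow (\neg\Eff{C}\psi \wedge \neg\Eff{C}\neg\psi)$.
\item From the induction hypothesis, \textbf{PL} gives $\vdash \neg\psi \leftrightarrow \neg\tr(\psi)$. Applying \textbf{RE} to this and to $\psi \leftrightarrow \tr(\psi)$ yields
\[
\vdash \Eff{C}\psi \leftrightarrow \Eff{C}\tr(\psi)
\quad\text{and}\quad
\vdash \Eff{C}\neg\psi \leftrightarrow \Eff{C}\neg\tr(\psi).
\]
Negating and conjoining these propositionally gives the equivalence of the right-hand side of (a) with $\tr(\FI_C(\psi))$.
\end{enumerate}
Transitivity of $\leftrightarrow$ then closes the case.

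The main point to watch is that the induction hypothesis must be applied inside the modal operators. Instantiating $\mathrm{Def}\text{-}\FI$ alone only unfolds the outermost occurrence of $\FI$; nested $\FI$ subformulas inside $\psi$ are handled by the induction hypothesis combined with \textbf{RE}. Accordingly, \textbf{RE} (or \textbf{M}) must be available over the full extended language, and the paper explicitly grants this.
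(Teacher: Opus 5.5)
Your proposal is correct and follows the paper's proof essentially step for step. Both use structural induction, handle the Boolean cases propositionally and the $\Eff{C}$ case by \textbf{RE}, and close the $\FI_C(\psi)$ case by combining $\mathrm{Def}\text{-}\FI$ with two applications of \textbf{RE}: one to $\psi \leftrightarrow \tr(\psi)$ and one to $\neg\psi \leftrightarrow \neg\tr(\psi)$.
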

	
	\begin{proof}
		We proceed by structural induction on \(\varphi\). The atomic case is trivial, and the Boolean cases follow from classical propositional logic together with the induction hypothesis.
		
		For \(\varphi = \Eff{C}\psi\), the induction hypothesis gives
		\[
		\vdash_{\CLFI} \psi \leftrightarrow \tr(\psi).
		\]
		Applying \textbf{RE}, we obtain
		\[
		\vdash_{\CLFI}
		\Eff{C}\psi
		\leftrightarrow
		\Eff{C}\tr(\psi).
		\]
		Since \(\tr(\Eff{C}\psi) = \Eff{C}\tr(\psi)\), the claim follows.
		
		For \(\varphi = \FI_C(\psi)\), the axiom \(\mathrm{Def}\text{-}\FI\) gives
		\[
		\vdash_{\CLFI}
		\FI_C(\psi)
		\leftrightarrow
		\bigl(
		\neg\Eff{C}\psi
		\wedge
		\neg\Eff{C}\neg\psi
		\bigr).
		\]
		The induction hypothesis gives
		\[
		\vdash_{\CLFI} \psi \leftrightarrow \tr(\psi).
		\]
		By \textbf{RE},
		\[
		\vdash_{\CLFI}
		\Eff{C}\psi
		\leftrightarrow
		\Eff{C}\tr(\psi).
		\]
		By propositional reasoning,
		\[
		\vdash_{\CLFI}
		\neg\psi
		\leftrightarrow
		\neg \tr(\psi).
		\]
		Another application of \textbf{RE} yields
		\[
		\vdash_{\CLFI}
		\Eff{C}\neg\psi
		\leftrightarrow
		\Eff{C}\neg \tr(\psi).
		\]
		Hence propositional reasoning gives
		\[
		\vdash_{\CLFI}
		\neg\Eff{C}\psi
		\leftrightarrow
		\neg\Eff{C}\tr(\psi),
		\]
		and
		\[
		\vdash_{\CLFI}
		\neg\Eff{C}\neg\psi
		\leftrightarrow
		\neg\Eff{C}\neg \tr(\psi).
		\]
		Substituting equivalents, we derive
		\[
		\vdash_{\CLFI}
		\FI_C(\psi)
		\leftrightarrow
		\bigl(
		\neg\Eff{C}\tr(\psi)
		\wedge
		\neg\Eff{C}\neg \tr(\psi)
		\bigr).
		\]
		Since the right-hand side is exactly \(\tr(\FI_C(\psi))\), we obtain
		\[
		\vdash_{\CLFI}
		\FI_C(\psi)
		\leftrightarrow
		\tr(\FI_C(\psi)).
		\]
	\end{proof}

	\subsection{Meta-Theorems}
	\label{subsec:clfi-metatheory}
	
	The elimination translation \(\tr\) establishes that \(\CLFI\) is a conservative definitional extension of standard Coalition Logic.
	
	\begin{theorem}[Soundness and Completeness]
		\label{thm:clfi-sound-complete}
		The system \(\CLFI\) is sound and complete with respect to the class of playable coalition models.
	\end{theorem}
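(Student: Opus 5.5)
The plan is to reduce both directions to Pauly's soundness and completeness theorem for $\CL$ over playable coalition models~\cite{Pauly02}, using the elimination translation $\tr$ as the bridge. Lemma~\ref{lem:truth-preservation} gives semantic equivalence of $\varphi$ and $\tr(\varphi)$. Lemma~\ref{lem:derivable-elimination} gives provable equivalence in $\CLFI$. No new canonical model construction will be needed.

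\emph{Soundness.} I will argue by induction on the length of $\CLFI$-derivations that every theorem is valid on all playable coalition models. For the inherited axiom schemes (\textbf{PL}, \textbf{$\bot$}, \textbf{M}, \textbf{S}, \textbf{N}), an instance over $\Lang_{\CLFI}$ has the same shape as a $\CL$-instance, only with subformulas that may contain $\FI$. Validity depends solely on the truth sets of those subformulas, so the usual arguments from playability (Liveness, Outcome Monotonicity, Superadditivity, $N$-Maximality in Definition~\ref{def:playability}) transfer verbatim. Alternatively, I will note that $\tr$ maps each such instance to a $\CL$-instance of the same scheme and apply Lemma~\ref{lem:truth-preservation}. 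The axiom $\mathrm{Def}\text{-}\FI$ is valid directly by Definition~\ref{def:semantics-fi}. Modus ponens preserves validity. \textbf{RE} preserves validity because provably equivalent formulas are, by the induction hypothesis, valid equivalents and hence have identical truth sets.

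\emph{Completeness.} Suppose $\varphi\in\Lang_{\CLFI}$ is valid on all playable coalition models. By Lemma~\ref{lem:truth-preservation}, $\tr(\varphi)$ is also valid there. Since $\tr(\varphi)\in\Lang_{\CL}$, Pauly's completeness theorem gives $\vdash_{\CL}\tr(\varphi)$. Every $\CL$ axiom and rule is included in $\CLFI$, so $\vdash_{\CLFI}\tr(\varphi)$. Lemma~\ref{lem:derivable-elimination} gives $\vdash_{\CLFI}\varphi\leftrightarrow\tr(\varphi)$, and propositional reasoning then yields $\vdash_{\CLFI}\varphi$.

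The only delicate point is the soundness of the schemes and \textbf{RE} over the extended language. There I must make sure that instances whose subformulas contain $\FI$ are handled correctly. Commuting $\tr$ with substitution handles this: $\tr$ of a scheme instance is an instance of the same scheme with $\tr$ applied to the metavariables. This is immediate from Definition~\ref{def:elimination-translation}, since $\tr$ is homomorphic on $\neg$, $\wedge$ and $\Eff{C}$. If strong completeness is wanted, the same argument goes through for sets of premises $\Gamma$, applying $\tr$ pointwise and using Pauly's strong completeness. Compactness-type issues do not arise beyond those already handled in $\CL$.
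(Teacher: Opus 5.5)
Your proposal is correct and follows the paper's own route. Soundness comes from the inherited $\CL$ schemes and rules remaining sound over the extended language (with \textbf{RE} sound because provably equivalent formulas have identical truth sets) together with the validity of $\mathrm{Def}\text{-}\FI$; completeness goes through $\tr$, Lemma~\ref{lem:truth-preservation}, Pauly's completeness for $\CL$, and Lemma~\ref{lem:derivable-elimination}. Your explicit induction on derivations and the observation that $\tr$ commutes with scheme instantiation just spell out what the paper states in one line.
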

	
	\begin{proof}
		\textbf{Soundness.}
		The axioms and rules inherited from \(\CL\) are sound over playable coalition models~\cite{Pauly02}. The rules \textbf{M} and \textbf{RE} remain sound over the extended language because every \(\Lang_{\CLFI}\)-formula denotes a well-defined truth set in every model. The additional axiom \(\mathrm{Def}\text{-}\FI\) is valid by Definition~\ref{def:semantics-fi}. Hence every theorem of \(\CLFI\) is valid.
		
		\medskip
		\noindent\textbf{Completeness.}
		Let \(\varphi \in \Lang_{\CLFI}\) be valid over all playable coalition models. By Lemma~\ref{lem:truth-preservation},
		\[
		\models \tr(\varphi).
		\]
		Since \(\tr(\varphi) \in \Lang_{\CL}\) and standard Coalition Logic is complete over playable models~\cite{Pauly02, Pauly01}, we obtain
		\[
		\vdash_{\CL} \tr(\varphi).
		\]
		All \(\CL\)-theorems are derivable in \(\CLFI\), so
		\[
		\vdash_{\CLFI} \tr(\varphi).
		\]
		By Lemma~\ref{lem:derivable-elimination},
		\[
		\vdash_{\CLFI} \varphi \leftrightarrow \tr(\varphi).
		\]
		Therefore, by propositional reasoning and modus ponens,
		\[
		\vdash_{\CLFI} \varphi.
		\]
		Thus \(\CLFI\) is complete.
	\end{proof}
	
	\begin{theorem}[Conservativity]
		\label{thm:clfi-conservativity}
		For every formula \(\varphi \in \Lang_{\CL}\),
		\[
		\vdash_{\CLFI} \varphi
		\quad\Longleftrightarrow\quad
		\vdash_{\CL} \varphi.
		\]
	\end{theorem}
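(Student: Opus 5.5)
The plan is to prove the two directions separately. The right-to-left direction is immediate, and the left-to-right direction will be routed through semantics using the soundness and completeness results already established.

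\emph{Right to left.} By construction, every axiom scheme and rule of $\CL$ is also an axiom scheme or rule of $\CLFI$, now read over the larger language $\Lang_{\CLFI}$. Since $\Lang_{\CL} \subseteq \Lang_{\CLFI}$, a $\CL$-derivation of $\varphi$ is already a $\CLFI$-derivation. So $\vdash_{\CL}\varphi$ implies $\vdash_{\CLFI}\varphi$. The same observation was used in the completeness half of Theorem~\ref{thm:clfi-sound-complete}.

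\emph{Left to right.} Suppose $\vdash_{\CLFI}\varphi$ with $\varphi \in \Lang_{\CL}$.
\begin{enumerate}[label=(\arabic*)]
\item By the soundness half of Theorem~\ref{thm:clfi-sound-complete}, $\varphi$ is valid over all playable coalition models.
\item The satisfaction clauses for $\FI$-free formulas are exactly the $\CL$ clauses, so $\varphi$ is $\CL$-valid over playable models.
\item By Pauly's completeness theorem for $\CL$ \cite{Pauly02, Pauly01}, $\vdash_{\CL}\varphi$.
\end{enumerate}

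A purely syntactic alternative is also available. Given a $\CLFI$-derivation $\varphi_1,\dots,\varphi_k=\varphi$, apply $\tr$ to every line and check that each translated line is a $\CL$-theorem. The checks are as follows.
\begin{itemize}
\item Translations of $\CL$ axiom instances are $\CL$ axiom instances, because $\tr$ commutes with the Boolean connectives and with $\Eff{C}$.
\item The translation of $\mathrm{Def}\text{-}\FI$ is a propositional tautology of the form $\chi\leftrightarrow\chi$.
\item Modus ponens, \textbf{M} and \textbf{RE} are preserved for the same commutation reason.
\end{itemize}
Since $\tr$ is the identity on $\Lang_{\CL}$, which is a simple induction, the last translated line is $\varphi$ itself.

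The main obstacle is the soundness step for the inherited rules \textbf{M} and \textbf{RE} over the extended language. One must make sure that truth sets of $\Lang_{\CLFI}$-formulas behave exactly like those of $\CL$-formulas. Lemma~\ref{lem:truth-preservation} secures this, since it gives $\sem{\psi}_{\Mod}=\sem{\tr(\psi)}_{\Mod}$. Because soundness has already been established in Theorem~\ref{thm:clfi-sound-complete}, the semantic route is the one I would present. I would mention the syntactic translation argument as an independent check that avoids appealing to completeness of $\CL$.
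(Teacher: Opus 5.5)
Your semantic route is the paper's proof: the right-to-left direction holds because $\CLFI$ contains the $\CL$ axioms and rules, and the left-to-right direction uses soundness of $\CLFI$ (Theorem~\ref{thm:clfi-sound-complete}) followed by Pauly's completeness theorem for $\CL$. Your syntactic alternative is also correct, and it avoids relying on completeness of $\CL$; it translates a $\CLFI$-derivation line by line under $\tr$, turning $\mathrm{Def}\text{-}\FI$ into $\chi\leftrightarrow\chi$ and using that $\tr$ is the identity on $\Lang_{\CL}$ (one small point: soundness of \textbf{M} and \textbf{RE} over the extended language needs only that every $\Lang_{\CLFI}$-formula has a well-defined truth set, so Lemma~\ref{lem:truth-preservation} is not actually required there).
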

	
	\begin{proof}
		The right-to-left direction holds because \(\CLFI\) contains all axioms and rules of \(\CL\).
		
		Conversely, assume \(\vdash_{\CLFI} \varphi\), where \(\varphi \in \Lang_{\CL}\). By soundness of \(\CLFI\),
		\[
		\models \varphi.
		\]
		Since \(\varphi\) belongs to the standard Coalition Logic language, completeness of \(\CL\) gives
		\[
		\vdash_{\CL} \varphi.
		\]
		Hence \(\CLFI\) is conservative over \(\CL\).
	\end{proof}
	
	\begin{theorem}[Decidability and Complexity]
		\label{thm:clfi-complexity}
		The satisfiability problem for \(\CLFI\) is \(\PSPACE\)-complete.
	\end{theorem}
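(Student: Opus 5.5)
The plan is to reduce satisfiability for $\CLFI$ to satisfiability for $\CL$ and back, using Pauly's result that $\CL$-satisfiability over playable models is $\PSPACE$-complete~\cite{Pauly02}.

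\textbf{Lower bound.} This is immediate. Every $\Lang_{\CL}$-formula is an $\Lang_{\CLFI}$-formula with the same semantics, so the identity map is a polynomial reduction from $\CL$-satisfiability to $\CLFI$-satisfiability. Hence $\CLFI$-satisfiability is $\PSPACE$-hard.

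\textbf{Upper bound.} Lemma~\ref{lem:truth-preservation} says $\varphi$ is satisfiable iff $\tr(\varphi)$ is, so it is tempting to just compute $\tr(\varphi)$ and run Pauly's procedure. The obstacle is the size of the translation. The clause for $\FI_C(\psi)$ copies $\tr(\psi)$ twice, so nested $\FI$ operators give exponential blow-up as a string. I would handle this in one of two ways.

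\begin{enumerate}
\item Measure formulas by the number of distinct subformulas, i.e.\ as DAGs. Then $\tr(\varphi)$ has only linearly many distinct subformulas: each $\FI_C(\psi)$ contributes $\tr(\psi)$, $\neg\tr(\psi)$, $\Eff{C}\tr(\psi)$, $\Eff{C}\neg\tr(\psi)$ and a bounded number of Boolean nodes. I would then observe that Pauly's $\PSPACE$ procedure works over the subformula closure, guessing Hintikka-style sets of subformulas and recursing on $\Eff{}$-depth. Its space is therefore polynomial in the size of the subformula closure and the modal depth, both of which are preserved up to a constant factor.
\item Alternatively, adapt the tableau or elimination procedure directly to $\Lang_{\CLFI}$. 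At a state, treat $\FI_C(\psi)$ as the pair of negative constraints $\neg\Eff{C}\psi$ and $\neg\Eff{C}\neg\psi$ without expanding $\psi$. The local playability consistency check then sees only a polynomially larger set of positive and negative effectivity constraints on the same set of arguments $\{\psi,\neg\psi\}$.
\end{enumerate}

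The main step needing care is confirming that Pauly's algorithm is indeed closure-based, so that shared subformulas are not re-explored. I would do this by reproving the key local lemma. A set of literals $\Eff{C_j}\psi_j$ and $\neg\Eff{D_k}\chi_k$ is satisfiable at a playable state iff certain propositional combinations of the $\psi_j$ and $\chi_k$ are consistent. This is checkable recursively in depth-first fashion with polynomial space.

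With both bounds in place, $\CLFI$-satisfiability is $\PSPACE$-complete, and decidability follows as a byproduct.
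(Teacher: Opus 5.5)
Your proposal is correct and follows essentially the same route as the paper. Hardness comes from $\Lang_{\CL}$ being a fragment of $\Lang_{\CLFI}$, and membership comes from running the standard polynomial-space $\CL$ procedure over a DAG-shared closure in which $\FI_C(\psi)$ is unfolded locally into $\neg\Eff{C}\psi$ and $\neg\Eff{C}\neg\psi$, with $\neg\FI_C(\psi)$ unfolded into the corresponding disjunction.

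You explicitly flag two things the paper only asserts: the exponential string blow-up of $\tr$, and the need for Pauly's procedure to run in space polynomial in the number of distinct subformulas. On the second point, re-exploring shared subformulas is harmless for space; what matters is that the procedure never writes out the expanded string.
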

	
	\begin{proof}
		\textbf{Lower bound.}
		Since \(\Lang_{\CL}\) is a syntactic fragment of \(\Lang_{\CLFI}\), every \(\CL\)-satisfiability instance is also a \(\CLFI\)-satisfiability instance. As \(\CL\)-satisfiability is \(\PSPACE\)-hard~\cite{Pauly02}, \(\CLFI\)-satisfiability is \(\PSPACE\)-hard.
		
		\medskip
		\noindent\textbf{Upper bound.}
		Satisfiability can be decided by the standard polynomial-space decision procedure for Coalition Logic, extended with a local unfolding rule for \(\FI\):
		\[
		\FI_C(\psi)
		\equiv
		\neg\Eff{C}\psi
		\wedge
		\neg\Eff{C}\neg\psi.
		\]
		Equivalently,
		\[
		\neg\FI_C(\psi)
		\equiv
		\Eff{C}\psi
		\vee
		\Eff{C}\neg\psi.
		\]
		
		Define the extended closure \(\mathrm{Cl}_{\FI}(\varphi)\) to contain the ordinary subformulas of \(\varphi\), their negations, and, for each occurrence of \(\FI_C(\psi)\), the formulas
		\[
		\Eff{C}\psi,\quad
		\Eff{C}\neg\psi,\quad
		\neg\Eff{C}\psi,\quad
		\neg\Eff{C}\neg\psi.
		\]
		The construction is performed over the syntax DAG (directed acyclic graph) of \(\varphi\), so subformulas are shared rather than copied. Hence
		\[
		|\mathrm{Cl}_{\FI}(\varphi)|
		\]
		is polynomial in \(|\varphi|\).
		
		The standard \(\PSPACE\) tableau or automata-based decision procedure for Coalition Logic can then be applied to this extended closure. Whenever a node contains \(\FI_C(\psi)\), the procedure treats it as requiring both
		\[
		\neg\Eff{C}\psi
		\quad\text{and}\quad
		\neg\Eff{C}\neg\psi.
		\]
		Whenever a node contains \(\neg\FI_C(\psi)\), the procedure treats it as requiring
		\[
		\Eff{C}\psi
		\quad\text{or}\quad
		\Eff{C}\neg\psi.
		\]
		These are local Boolean expansions and do not increase the space consumption beyond polynomial space. The effectivity constraints are then handled exactly as in the standard \(\CL\) decision procedure.
		
		Therefore, \(\CLFI\)-satisfiability is in \(\PSPACE\). Together with the lower bound, this establishes \(\PSPACE\)-completeness.
	\end{proof}
	
	\begin{remark}[Conceptual Value of the Extension]
		\label{rem:conceptual-value}
		Promoting \(\FI\) to an explicit modality does not introduce additional computational overhead or alter the expressive power of Coalition Logic. The value of \(\CLFI\) is conceptual and structural: it isolates the strategically significant case in which a coalition can enforce neither a specification nor its negation, providing direct proof-theoretic access to the notion of Full Inability.
	\end{remark}

	\section{Conclusion and Future Directions}
	\label{sec:conclusion}
	
	This paper has investigated the formal structure of coalitional inability within the framework of Coalition Logic. Building on recent work that elevated inability to an independent modality \cite{wang2026logic}, we have demonstrated that inability itself has internal structure. Rather than treating inability as a single undifferentiated concept---the simple negation of ability---we have shown that ability and inability together form a mathematically precise four-fold strategic spectrum. Within this spectrum, we developed a systematic logic of \emph{Full Inability} (\(\FI\)), which isolates the strongest form of coalitional limitation: a configuration in which a coalition can enforce neither a proposition nor its negation.
	
	\subsection{Summary of Contributions}
	\label{subsec:summary}
	
	The main contributions of this paper are organized along four dimensions.
	
	\begin{enumerate}[label=(\arabic*)]
		\item \textbf{The Four-Fold Strategic Spectrum.}
		We replaced the coarse binary distinction between ability and non-ability with a four-fold partition:
		\[
		\FC, \quad \PD, \quad \AD, \quad \FI.
		\]
		This spectrum provides an exhaustive and mutually exclusive classification of a coalition's strategic relation to any proposition. It explicitly separates the capacity to enforce a specific truth value from the stronger capacity to determine the issue in either direction. Full Control (\(\FC\)) and Full Inability (\(\FI\)) define the two extremes of the determination axis, whereas Positive Determination (\(\PD\)) and Adverse Determination (\(\AD\)) represent asymmetric, one-sided forms of strategic power.
		
		\item \textbf{Symmetry and Conditional Duality.}
		We established that the four-fold spectrum is governed by a Klein four-group (\(V_4\)) symmetry generated by propositional negation and coalition complementation, under the assumption of \(\alpha\)-duality. This algebraic perspective reveals that \(\FI\) is not merely a residual category but a structurally significant counterpart to \(\FC\). In particular, the invariance of \(\FI\) under propositional negation, as established in Proposition~\ref{prop:negation-symmetry}, confirms its role as a modality of pure non-determination: if a coalition is fully unable with respect to \(\varphi\), then it is equally fully unable with respect to \(\neg\varphi\).
		
		\item \textbf{Order-Convexity and Strategic Contiguity.}
		By mapping formulas to their truth sets within the powerset lattice \((\powerset(W), \subseteq)\), we proved the Convexity Theorem (Theorem~\ref{thm:convexity}) for the four power regions. This order-theoretic result establishes the stability of the strategic categories under interpolation in the subset order. In particular, if two outcome sets lie in the Full Inability region \(\Rfi^w(C)\), then every intermediate outcome set remains similarly outside both the enforceable region \(E_w(C)\) and the co-enforceable region \(E_w^*(C)\). This property, which we termed \emph{Strategic Contiguity}, supports interval-based verification of inability guarantees.
		
		\item \textbf{The Logic \(\CLFI\).}
		The proof system \(\CLFI\) provides a proof-theoretic treatment of Full Inability as an explicit modality. Through an elimination translation into standard Coalition Logic, we established soundness, completeness, and conservativity, as stated in Theorems~\ref{thm:clfi-sound-complete}--\ref{thm:clfi-conservativity}. The extension adds conceptual clarity without increasing expressive power or computational complexity: \(\CLFI\)-satisfiability remains \(\PSPACE\)-complete (Theorem~\ref{thm:clfi-complexity}).
	\end{enumerate}
	
	\subsection{Conceptual and Practical Significance}
	\label{subsec:discussion-significance}
	
	The explicit formalization of \(\FI\) bridges qualitative logical accounts of agency with fine-grained analyses of strategic dependence.
	
	\paragraph{Connection to Power Indices.}
	In social choice theory and cooperative game theory, classical power indices such as the Banzhaf and Shapley--Shubik indices measure the global structural influence of agents across voting mechanisms or coalitional games. By contrast, \(\FI_C(\varphi)\) is local, state-dependent, and proposition-sensitive. It identifies precisely when and where a coalition is strategically neutralized with respect to a specific issue at a specific state.
	
	This localized perspective refines the concept of a \emph{propositional dummy} (Definition~\ref{def:dummy}). An agent may possess substantial structural power over certain propositions while exhibiting Full Inability over others. Dummyhood is therefore not an immutable global property but an emergent characteristic relative to a specific proposition, state, and effectivity structure. The \(\CLFI\) framework is thus suitable for modeling context-sensitive forms of strategic irrelevance, localized dependence, and shifting distributions of influence.
	
	\paragraph{Applications to AI Safety.}
	In AI safety and formal verification, Full Inability provides a rigorous specification for certain forms of \emph{strategic containment}. Traditional safety requirements often take a negative form: verifying that an autonomous agent cannot force a harmful outcome. The \(\FI\) modality strengthens this negative requirement by requiring that the agent can force neither the critical event nor its complement. Thus \(\FI\) can function as an \emph{inability guardrail} for variables that should remain outside the unilateral control of a given agent or coalition.
	
	This interpretation should be applied with care. Full Inability does not by itself guarantee that a desirable safety condition will hold; rather, it guarantees that the specified coalition cannot determine the relevant proposition in either direction. In safety-critical settings, \(\FI_C(\varphi)\) is therefore most naturally combined with positive enforceability requirements for trusted supervisors, environmental constraints, or external verification mechanisms. Its value lies in formally certifying the absence of unilateral strategic control by the potentially unsafe coalition.
	
	The order-convexity of the Full Inability region \(\Rfi^w(C)\) can streamline verification. If two bounding specifications are certified to lie within \(\Rfi^w(C)\), then every intermediate specification inherits the same \(\FI\)-status. This allows verification algorithms to certify contiguous intervals of inability rather than checking each specification independently.
	
	\subsection{Future Directions}
	\label{subsec:future}
	
	The results of this paper suggest several directions for future research.
	
	\paragraph{Temporal Extensions.}
	A natural next step is to lift the four-fold spectrum into temporal logics of strategic ability, most notably Alternating-time Temporal Logic (\(\ATL\)). In a temporal setting, Full Inability can be defined relative to path specifications. If \(\coop{C}\) denotes the \(\ATL\) strategic modality, then for a temporal objective \(\Phi\) one may define
	\[
	\FI_C^{\ATL}(\Phi)
	\;\defeq\;
	\neg\coop{C}\Phi
	\;\wedge\;
	\neg\coop{C}\neg\Phi.
	\]
	For example, taking \(\Phi = \Box\varphi\) yields
	\[
	\FI_C^{\ATL}(\Box\varphi)
	\;\equiv\;
	\neg\coop{C}\Box\varphi
	\;\wedge\;
	\neg\coop{C}\Diamond\neg\varphi.
	\]
	The first conjunct states that \(C\) cannot enforce the invariant \(\Box\varphi\); the second states that \(C\) cannot force its temporal dual \(\Diamond\neg\varphi\). This extension would enable the study of persistent inability across the temporal evolution of a system. An important open question is whether temporal Full Inability admits fixed-point characterizations analogous to those in the modal \(\mu\)-calculus.
	
	\paragraph{Epistemic Extensions.}
	Another direction is the integration of \(\FI\) with epistemic logic. Formulas such as
	\[
	K_i \FI_C(\varphi)
	\]
	would express that agent \(i\) knows coalition \(C\) is fully unable to determine \(\varphi\). This enables reasoning about higher-order strategic awareness. Such a framework could formalize \emph{rational resignation}: when Full Inability becomes common knowledge, rational agents may abandon futile coordination. Conversely, it could model \emph{strategic deception}, where a coalition appears fully unable while retaining hidden enforceability. The interaction between verified inability, epistemic uncertainty \cite{DDP24}, and deceptive signaling merits systematic investigation.
	
	\paragraph{Causality and Responsibility.}
	Full Inability has natural connections to theories of causality and responsibility in multi-agent systems. In causal models \cite{Halpern16, Beckers24}, an agent or coalition is causally irrelevant to an outcome if its actions do not affect whether the outcome occurs. Full Inability provides a modal-logical counterpart: if \(\FI_C(\varphi)\) holds, then coalition \(C\) is strategically irrelevant to the truth of \(\varphi\). This suggests a formal bridge between effectivity-based and causality-based accounts of agency. Similarly, in frameworks for responsibility and accountability \cite{LMW23}, Full Inability may serve as a sufficient condition for exculpation: a coalition fully unable to determine an outcome cannot be held responsible for it. Integrating \(\FI\) with causal and deontic logics could yield a unified account of power, causation, and moral responsibility in strategic settings.
	
	\paragraph{Quantitative Refinements.}
	Although \(\FI\) is qualitative, it provides a foundation for quantitative analysis. For a model \(\Mod\), an agent \(i\), and a proposition \(\varphi\), one could define state sets corresponding to the four power categories:
	\begin{align*}
		S_{\FC}^{\Mod}(i,\varphi)
		&=
		\{\,w\in W \mid \sem{\varphi}_{\Mod}\in \Rfc^w(\{i\})\,\},\\
		S_{\PD}^{\Mod}(i,\varphi)
		&=
		\{\,w\in W \mid \sem{\varphi}_{\Mod}\in \Rpd^w(\{i\})\,\},\\
		S_{\AD}^{\Mod}(i,\varphi)
		&=
		\{\,w\in W \mid \sem{\varphi}_{\Mod}\in \Rad^w(\{i\})\,\},\\
		S_{\FI}^{\Mod}(i,\varphi)
		&=
		\{\,w\in W \mid \sem{\varphi}_{\Mod}\in \Rfi^w(\{i\})\,\}.
	\end{align*}
	This allows the definition of a \emph{strategic profile}
	\[
	\mathrm{Profile}_{\Mod}(i,\varphi)
	\;=\;
	\bigl(
	|S_{\FC}^{\Mod}(i,\varphi)|,\;
	|S_{\PD}^{\Mod}(i,\varphi)|,\;
	|S_{\AD}^{\Mod}(i,\varphi)|,\;
	|S_{\FI}^{\Mod}(i,\varphi)|
	\bigr),
	\]
	counting the states at which each power category holds. Such profiles constitute logical analogues of classical power indices. Comparing these state-based counts with cooperative game-theoretic measures may yield new connections between modal semantics and quantitative social choice theory.
	
	\paragraph{Bilattice and Galois-Theoretic Generalizations.}
	The Strategic Bilattice introduced in Section~\ref{subsec:strategic-bilattice} suggests a deeper algebraic development of the four-fold spectrum. Each strategic status of a coalition with respect to a proposition can be represented by the pair
	\[
	\bigl(
	\Eff{C}\varphi,\;
	\Eff{C}\neg\varphi
	\bigr)
	\in \{0,1\}^2,
	\]
	with \(\FC,\PD,\AD,\FI\) corresponding respectively to \((1,1),(1,0),(0,1),(0,0)\). Thus the spectrum is naturally isomorphic to the Belnap bilattice \(\mathcal{FOUR}\). The determination order
	\[
	\FI \leq_d \PD,\AD \leq_d \FC
	\]
	measures the amount of issue-determining power possessed by a coalition, whereas the directionality order
	\[
	\AD \leq_\delta \FI,\FC \leq_\delta \PD
	\]
	measures the orientation of that power toward \(\varphi\) or toward \(\neg\varphi\). This observation provides a concrete basis for bilattice-valued Coalition Logic, where effectivity information may be incomplete, inconsistent, or supplied by multiple conflicting sources.
	
	A related direction is to study effectivity as a polarity between coalitions and objectives. At each state \(w\), the relation
	\[
	C \Vdash_w X
	\quad\Longleftrightarrow\quad
	X\in E_w(C)
	\]
	connects the coalition lattice \((\powerset(N),\subseteq)\) with the outcome lattice \((\powerset(W),\subseteq)\). The induced Galois operators can identify strategically closed families of coalitions and objectives, linking Coalition Logic with formal concept analysis and algebraic theories of dependence.
	
	\subsection{Concluding Remarks}
	\label{subsec:concluding-remarks}
	
	Coalition Logic has traditionally centered on positive effectivity: what coalitions can enforce. Recent work \cite{wang2026logic} elevated inability to a first-class modality, establishing its structural properties and axiomatization. The present paper advances this program by revealing that inability itself has internal structure.
	
	The key insight is that inability to force $\varphi$ does not entail inability to force $\neg\varphi$. A coalition satisfying \emph{simple inability} $\neg\Eff{C}\varphi$ may still enforce $\neg\varphi$, retaining adversarial control. By contrast, a coalition satisfying \emph{Full Inability} $\FI_C(\varphi) \equiv \neg\Eff{C}\varphi \land \neg\Eff{C}\neg\varphi$ is unable to settle the issue either way. The truth of $\varphi$ is left to forces beyond the coalition's unilateral control.
	
	This distinction generates a four-fold strategic spectrum:
	\[
	\FC,\quad \PD,\quad \AD,\quad \FI.
	\]
	Within this spectrum, Full Inability is not a residual category but the cornerstone of a systematic algebraic and order-theoretic structure. Under $\alpha$-duality, the four categories exhibit Klein four-group symmetry. In playable models, they correspond to order-convex regions in the powerset lattice. The resulting framework reveals that coalitional inability is governed by algebraic symmetry, lattice-theoretic convexity, and conservative proof-theoretic extension.
	
	Thus, this paper contributes to the broader program of treating inability as a first-class object of logical study. Where earlier work established inability as an independent modality, the present work reveals its internal structure. The logic of inability remains a rich and fruitful direction for further development.

	\end{document}